\DeclareMathAlphabet{\mathscr}{OT1}{pzc}{m}{it} % for cursive
\DeclareMathOperator{\ad}{\textrm{ad}}
\DeclareMathOperator{\tr}{\textrm{trace}}
    \def\MR#1{} 
\newtheorem{theorem}{Theorem}[section]
\def \H{\mathcal{H}}
\newcommand{\QQ}{\mathbb{Q}}
\newcommand{\RR}{\mathbb{R}}
\newcommand{\CC}{\mathbb{C}}
\newcommand{\SL}{\operatorname{SL}}
\theoremstyle{remark}
\newtheorem{remark}[theorem]{Remark}
\renewcommand{\sl}{\mathfrak{sl}}
\title[Maximally entangled 3-qutrits]{Maximally entangled real states and SLOCC invariants: the 3-qutrit case}
\author{Hamza Jaffali$^{1}$, Fr\'ed\'eric Holweck$^{2,3}$, Luke Oeding$^{3}$}
\date{%
    $^1$ \textit{{ColibrITD}, 91 Rue du Faubourg Saint Honor\'e, 75008 Paris, France}\\
    \href{mailto:hamza.jaffali@colibritd.com}{\textit{hamza.jaffali@colibritd.com}}\\
    ~\\
    $^2$ \textit{{ICB}, UMR 6303, CNRS, University Bourgogne Franche-Comt\'e, {UTBM}, 90010 Belfort, France}\\%
    \href{mailto:frederic.holweck@utbm.fr}{\textit{frederic.holweck@utbm.fr}}\\
    ~\\
    $^3$ \textit{Department of Mathematics and Statistics, Auburn University, Auburn, AL, USA}\\
    \href{mailto:oeding@auburn.edu}{\textit{oeding@auburn.edu}}\\
    %\today
}
\begin{document}
\maketitle

\begin{abstract}
The absolute values of polynomial SLOCC invariants (which always vanish on separable states) can be seen as measures of entanglement. 
We study the case of real 3-qutrit systems and discover a new set of maximally entangled states (from the point of view of maximizing the hyperdeterminant). We also study the basic fundamental invariants and find real 3-qutrit states that maximize their absolute values. It is notable that the Aharonov state is a simultaneous maximizer for all 3 fundamental invariants. We also study the evaluation of these invariants on random real 3-qutrit systems and analyze their behavior using histograms and level-set plots. Finally, we show how to evaluate these invariants on any 3-qutrit state using basic matrix operations.
\end{abstract}
\keywords{\noindent{\it Keywords\/}: Maximal Entanglement, SLOCC Invariants, 3-qutrits, Quantum Information}
%%%%%%%%%%%%%%%%%%%%%%
\section{Introduction}

% intro sur le calcul quantique, l'intrication
Quantum entanglement is a fundamental concept in quantum physics that is responsible for the correlation between two or more parties of a multipartite quantum system, independently of the distance separating those components. With no classical equivalent, it is a key resource in Quantum Computing and Quantum Communications and is believed to be one of the main reasons for their speed up and advantage. Being able to classify and quantify quantum entanglement is thus crucial to understand and develop more powerful protocols \cite{JaffaliThesis}.

% petit phrase pour indiquer que l'intrication des qutrits peut être intéressante, notamment car le calcul quantique peut aussi être envisagé à travers les qutrits
Qutrit systems, which involve three-level quantum systems, have garnered attention for their potential to enhance the functionality of quantum computing \cite{Gokhale2019}. While entangling qutrits is more challenging than entangling qubits, qutrits offer a richer set of possibilities for entanglement and allow for more complex operations \cites{Jaffali2016,Goss2022}. This makes them promising candidates for a range of quantum applications, including quantum error correction \cite{Gottesman1999} and quantum simulation \cite{Gustafson2022}.

% intro sur les états maximalement intriqués et leur utilité 
On the other hand, the study of multipartite maximally (or highly) entangled states is relevant in the context of Quantum Information Processing and Quantum Computing. Indeed, maximally entangled states are crucial in the implementation of quantum networks \cite{Perseguers2010}, in the field of MBQC (measurement-based quantum computer) \cite{Raussendorf2001}, in quantum error correcting codes \cites{Gour2007,Raissi2018}, and in quantum communication protocols \cites{Cleve1999,Helwig2013} to mention a few.

% parler des exemples et états de base, 2-qubit, états bipartite, et ensuite ptetr 3-qubits, et dire que quand on monte à des états plus généraux et sa chauffe
For bipartite or multipartite systems (with two or three particles), the question of maximally entangled states has been studied from different angles \cites{Enriquez2016, aulbach2010maximally}. The question becomes more difficult as the number of particles increases (already the four-qubit case is considerably more difficult than the three-qubit case). Depending on the measure we consider, we may obtain radically different states, and the notion of ``maximally entangled'' is not as straightforward as in the two or three-qubit case. In this study, we propose a new candidate from the perspective of algebraic measures of entanglement, more specifically, the absolute value of the hyperdeterminant of $3 \times 3 \times 3$ tensors, representing 3-qutrit quantum systems. We also apply the same methods to the other fundamental SLOCC (Stochastic Local Operations with Classical Communication) invariants and find states that maximize them.

%%% PLan de l'article
The article is organized as follows. In \Cref{sec:state_art}, we recall the main works concerning 3-qutrits maximally entangled states. In \Cref{sec:hyperdet}, we briefly introduce the notion of the hyperdeterminant and the classification of 3-qutrit systems. In \Cref{sec:state}, we present the new real 3-qutrit maximally entangled state candidate, and we prove that it maximizes the $3 \times 3 \times 3$ hyperdeterminant. In \Cref{sec:invariants} we also propose candidates that maximize the fundamental invariants, i.e., a set of invariant polynomials that can be used to express all algebraic invariants including the $3\times 3\times 3$ hyperdeterminant. \Cref{sec:representation} discusses the distribution of random $3$-qutrit quantum states with respect to the absolute value of the hyperdeterminants and fundamental invariants.  In \Cref{sec:compare} we provide practical tools to evaluate the invariants described in this paper on any $3$-qutrits quantum state. We conclude our work in \Cref{sec:conclusion} and provide directions for future works.

%%%%%%%%%%%%%%%%%%%%%%%%%%%
	\subsection{Previous Works}\label{sec:state_art}

The question of maximally entangled for 3-qutrit systems has already been studied in the past. In 2016, Enriquez \textit{et al.} proposed two random states $\ket{\Psi_1}$ and $\ket{\Psi_2}$, and one symmetric 3-qutrit state $\ket{D[3,(1,1,1)]}$ for which the minimal decomposition entropy is the largest \cite{Enriquez2016}:

\begin{equation}\label{eq:psi1}
\begin{split}
    \ket{\Psi_1} = 0.193e^{1.7i}\ket{000} +  0.323e^{−2.01i}\ket{001} + 0.16e^{−2.16i}\ket{002} +  0.229e^{−2.22i}\ket{010} \\ + 0.232e^{−3.12i}\ket{011} + 
    0.186e^{−2.5i}\ket{012} + 0.239e^{−2.34i}\ket{020} + 0.141e^{−0.411i}\ket{021} \\ + 0.159e^{−0.512i}\ket{022}  + 0.099e^{1.54i}\ket{100} + 0.144e^{−2.43i}\ket{101} + 0.148e^{2.13i}\ket{102} \\ + 0.263e^{−1.62i}\ket{110} + 0.322e^{0.475i}\ket{111} + 0.216e^{−1.95i}\ket{112} + 0.068e^{−1.39i}\ket{120} \\ + 0.030e^{−2.89i}\ket{121} + 0.171e^{1.91i}\ket{122}  + 0.253e^{−2.82i}\ket{200} + 0.022e^{−0.225i}\ket{201} \\ + 0.06e^{−1.2i}\ket{202}  + 0.003e^{2.64i}\ket{210} + 0.133e^{−1.52i}\ket{211} + 0.202e^{2.2i}\ket{212} \\ + 0.194e^{1.08i}\ket{220} + 0.207e^{1.13i}\ket{221} + 0.274e^{−2.29i}\ket{222}\;,
\end{split}
\end{equation}

\begin{equation}\label{eq:psi2}
\begin{split}
    \ket{\Psi_2} = 0.245e^{0.074i}\ket{000} + 0.024e^{2.49i}\ket{001} + 0.248e^{1.66i}\ket{002}  + 0.069e^{1.55i}\ket{010} \\ + 0.256e^{0.114i}\ket{011} + 0.118e^{−2.88i}\ket{012}  + 0.313e^{−1.24i}\ket{020} + 0.076e^{2.77i}\ket{021} \\ + 0.149e^{0.208i}\ket{022}  + 0.208e^{2.56i}\ket{100} +  0.227e^{−2.88i}\ket{101} + 0.157e^{2.27i}\ket{102} \\ + 0.072e^{3.08i}\ket{110} + 0.2e^{−1.07i}\ket{111} + 0.199e^{−1.87i}\ket{112}  + 0.13e^{−1.95i}\ket{120} \\ + 0.133e^{1.5i}\ket{121} + 0.218e^{−1.68i}\ket{122} + 0.244e^{−1.84i}\ket{200} +  0.191e^{−3.05i}\ket{201} \\ + 0.049e^{2.61i}\ket{202} + 0.144e^{1.22i}\ket{210} + 0.226e^{2.14i}\ket{211} + 0.278e^{−2.46i}\ket{212} \\ + 0.227e^{0.773i}\ket{220} + 0.186e^{−2.11i}\ket{221} + 0.218e^{−1.52i}\ket{222}\;,
\end{split}
\end{equation}

\begin{equation}\label{eq:D3}
    \ket{D[3,(1,1,1)]} = \frac{1}{\sqrt{6}} \Big( \ket{012} + \ket{021} + \ket{102} + \ket{120} + \ket{201} + \ket{210}  \Big)\;.
\end{equation}

The same year, Hebenstreit \textit{et al.} proposed a way to characterize maximally entangled sets for the generic 3-qutrit states (see \Cref{eq:semi-simple}), and how one can reach them through LOCC (Local Operations and Classical Communication) \cite{Hebenstreit2016}.

Even if the SLOCC classification of three-qutrits is well mathematically understood (see \Cref{sec:nurmiev}), deciding which type of qutrit entanglement is useful for quantum protocols is not straightforward. For instance, in \cite{grandjean2012bell} the violation of Bell-like inequalities is studied for different type of three qutrit entangled states including the generalization of the GHZ-state:
\begin{equation}\label{eq:ghz}
    \ket{\text{GHZ}_{333}}=\frac{1}{\sqrt{3}}\Big(\ket{000}+\ket{111}+\ket{222}\Big)\;,
\end{equation}
two generalizations of the $\ket{W}$ state:
\begin{equation}\label{eq:W}
\begin{array}{rcl}
    \ket{W}&=&\frac{1}{\sqrt{3}}\Big(\ket{100}+\ket{010}+\ket{001}\Big),
    \\
    \ket{W_{333}}&=&\frac{1}{\sqrt{6}}\Big(\ket{100}+\ket{200}+\ket{010}+\ket{020}+\ket{001}+\ket{002}\Big)\;,
    \end{array}
\end{equation}
as well as the Aharonov state or singlet state:
\begin{equation}\label{eq:A}
    \ket{\mathcal{A}}=\frac{1}{\sqrt{6}}\Big(\ket{012}+\ket{201}+\ket{120}-\ket{021}-\ket{102}-\ket{210}\Big)\;.
\end{equation}
In this study, the $\ket{W}$ state achieves higher violation than the $\ket{\text{GHZ}_{333}}$ of Bell's inequality, unlike the three-qubit case. The maximum violation is obtained for a state that was found numerically to be:
\begin{equation}\label{eq:psi3}
    \ket{\psi_3}=\frac{(3-2\sqrt{3})\ket{000}+\ket{011}+\ket{101}+\ket{110}}{2\sqrt{6-3\sqrt{3}}}\;.
\end{equation}
The conclusions of \cite{laskowski2014noise} are similar under the presence of noise, they also considered in their paper the following Dicke states:
\begin{equation}\label{eq:d2d3}
    \begin{array}{c}
%    D^1 _3=\frac{1}{\sqrt{3}}(\ket{100}+\ket{010}+\ket{001}\\
    D^2 _3=\frac{1}{\sqrt{15}}\Big(\ket{200}+\ket{020}+\ket{002}+2(\ket{110}+\ket{101}+\ket{011})\Big),\\
    D^3 _3=\frac{1}{\sqrt{10}}\Big(\ket{012}+\ket{021}+\ket{102}+\ket{120}+\ket{201}+\ket{210}+2\ket{111}\Big)\;.
    \end{array}
\end{equation}
Let us also note that the Aharonov state $\ket{\mathcal A}$ was shown to be the resource to solve the quantum byzantine agreement problem \cite{fitzi2001quantum} as well as the liar detection problem \cite{cabello2002n}. We will see that the Aharonov state also maximizes the fundamental invariants for 3 qutrits.

The literature on  entanglement of pure qutrit quantum states is not restricted to the three-partite case and higher system have been discussed \cites{goyeneche2014genuinely,Caves2000,Zha2020}.

%%%%%%%%%%%%%%%%%%%%%%%%%%%%%%%%%%%%%%%%%%%%%%%%%%%%%%%%%%%%%%%%%%%%
\section{The \texorpdfstring{$3\times3\times3$}{} Hyperdeterminant}\label{sec:hyperdet}

The hyperdeterminant, a generalization of the determinant, is a homogeneous polynomial evaluated on tensors. Geometrically,  the zero locus of the hyperdeterminant defines the equation of the dual variety $X^*$, i.e., the (closure) of the set of tangent hyperplanes of $X$, where $X$ is the variety of separable states. The standard reference for hyperdeterminants is \cite{GKZ}, and one may also enjoy reading the introductions by Ottaviani \cites{Ottaviani_Hyperdeterminants, Ottaviani_5Lectures}.

Since the hyperdeterminant is invariant under SLOCC (Stochastic Local Operations and Classical Communication), it can be used to characterize entanglement in quantum systems \cite{Miyake2003}. For example, in the case of three-qubit systems, the hyperdeterminant can be used to differentiate between the two genuine entangled states $\ket{\text{GHZ}}$ and $\ket{W}$ \cite{Holweck2012}. In addition to giving qualitative information about the entanglement of a quantum state, the hyperdeterminant can also be seen as a quantitative measure of entanglement \cites{Jaffali2019, de2021mermin}. The study of hyperdeterminants has led to the development of new techniques for characterizing entangled states and understanding their properties \cites{Miyake2003, Holweck2014, Holweck2014a, Jaffali2016, Jaffali2020, AlsinaThesis, csen2013hyperdeterminants}.

We recall that a general 3-qutrit state $\ket{\psi} \in \H_{333} = \CC^3 \otimes \CC^3 \otimes \CC^3$ can be expressed as:
\begin{equation}
    \ket{\psi} = \sum_{i,j,k=0}^{2} a_{ijk} \ket{i}\otimes\ket{j}\otimes\ket{k}= \sum_{i,j,k=0}^{2} a_{ijk} \ket{ijk}\;, 
\quad \text{with} \quad
    \sum_{i,j,k=0}^{2} |a_{ijk}|^2 = 1\;.
\end{equation}
In the case of three-qutrit systems the SLOCC group is
\[
    G_{SLOCC} = \SL_3(\CC) \times \SL_3(\CC) \times \SL_3(\CC)\;.
\]

Let  $\Delta_{333}$ denote the $3\times 3\times 3$-hyperdeterminant, which is, up to scale, the unique nonzero irreducible polynomial (of degree $36$) in the variables $a_{ijk}$ such that,
\begin{equation}
    \Delta_{333}(\ket \Phi) = 0 \; \Leftrightarrow \;
\exists a,b,c \in \CC^3 \text{ such that }
\left\{ \begin{matrix}
\bra {(a \otimes b \otimes z) }\ket{\Phi} = 0, \\
\bra {(a \otimes y \otimes c) }\ket{\Phi} = 0,\\
\bra {(x \otimes b \otimes c) }\ket{\Phi} = 0
\end{matrix} \right\}
\text{ for all } x, y, z \in \CC^3\;.
\end{equation}

In coordinates, we say that $\Delta_{333}$ is the unique up to scale nonzero polynomial (of degree 36) that vanishes on all tensors that are SLOCC equivalent to $\sum_{I\in \{0,1,2\}^3} x_I\ket{I}$ with 
$x_{000} = x_{001} = x_{002} = x_{010} = x_{020}  = x_{100} = x_{200} = 0$. 
In \Cref{sec:nurmiev}, we recall the classification of 3-qutrits systems under the action of SLOCC. In \Cref{sec:det_oeding}, we recall the expression of the hyperdeterminant for a generic 3-qutrit state. 

    %%%%%%%%%%%%%%%%%%%%%%%%%%%%%%%%%%%
    \subsection{Classification of SLOCC Orbits}\label{sec:nurmiev}
\subsubsection{Orbit Classification Over \texorpdfstring{$\CC$}{}}
In 2000, Nurmiev presented a classification of $3 \times 3 \times3$ arrays with entries in $\CC$ up to the action of the group SLOCC \cite{Nurmiev2000}. According to his classification, the orbits of $\H_{333}$ under the action of $G_{SLOCC}$ consist of 5 families (4 depending on parameters, 1 parameter free). The normal forms corresponding to each family are also described in \cite{Nurmiev2000}. 

Crucial to Nurmiev's classification is the embedding of $\H_{333}$ into a graded Lie algebra
\[
\mathfrak{e}_6 \simeq \sl_3^{\times 3} \oplus \H_{333}\oplus \H_{333}^*\;.
\] As such, any tensor $\ket{\psi}$ in $\H_{333}$ can be written in a unique way, as $\ket{\psi} = \ket{\psi_S}+\ket{\psi_N}$ with $\ket{\psi_S}$ semi-simple, $\ket{\psi_N}$ nilpotent part, and $[\ket{\psi_S}, \ket{\psi_N}] = 0$. Note, a tensor of size $3\times3\times3$ is said to be semi-simple if its orbit under SLOCC is closed and said to be nilpotent if the closure of its orbit under SLOCC contains the zero tensor. Moreover, just as the determinant of a matrix only depends on its eigenvalues, it is known that the value of any continuous invariant on $\ket{\psi} = \ket{\psi_S}+\ket{\psi_N} \in \H_{333} $ only depends on $\ket{\psi_S}$ \cite{Bremner2014}.

The semi-simple tensors in the case of 3-qutrits are also called \emph{generic} 3-qutrits states, as it is mentioned in \cite{Hebenstreit2016}. Nurmiev has shown that any complex semi-simple tensor is SLOCC-equivalent to a state of the form
\begin{equation}\label{eq:semi-simple}
\ket{\psi_S} = a \ket{v_1} + b \ket{v_2} + c \ket{v_3},
 \text{ with }  (a,b,c) \in \CC^3 , \text{ and }
    |a|^2 + |b|^2 + |c|^2 = 1\;, \text{ and }
\end{equation}
\begin{equation}\label{eq:basis_3qutrits_semisimple}
\begin{split}
\ket{v_1} =  \frac{1}{\sqrt{3}} \left( \ket{000} + \ket{111} + \ket{222} \right), \\
\ket{v_2} = \frac{1}{\sqrt{3}} \left(\ket{012} + \ket{120} + \ket{201} \right), \\
\ket{v_3} = \frac{1}{\sqrt{3}} \left(\ket{021} + \ket{102} + \ket{210} \right)\;.
\end{split}
\end{equation}

Back to Nurmiev's classification, each family can be defined as a linear combination of the three vectors $\ket{v_1}$, $\ket{v_2}$ and $\ket{v_3}$, plus a nilpotent part. The complex coefficients $a$, $b$ and $c$ associated with these vectors must satisfy a set of conditions, listed as follows:

\begin{itemize}
\item first family $F_1$ : $abc\neq 0$, $\left(a^3+b^3+c^3\right)^3 - \left(3abc\right)^3 \neq 0$,
\item second family $F_2$ : $b\left(a^3+b^3\right) \neq 0$, $c=0$,
\item third family $F_3$ : $a \neq 0$, $b=c=0$, 
\item fourth family $F_4$ : $c=-b \neq 0$, $a=0$.
\end{itemize}

The tensors of the first family correspond to semi-simple tensors and thus have no nilpotent part. We will be especially interested in this family since it is the only one that does not annihilate the hyperdeterminant. The fifth family of Nurmiev's classification, which does not depend on any parameter, is called the nilpotent cone, and only contains nilpotent orbits (no semi-simple part). The variety of nilpotent states coincides with the nilpotent cone, which is the variety where all invariants vanish \cites{Mumford1994,Bremner2013,Bremner2014}.

\subsubsection{Real Semi-simple Elements}\label{sec:real-semisimple}
In \cite{di2023classification}, Di Trani {\em et al.}, proved that the classification of semi-simple elements for real three-qutrit states is like Nurmiev's original proof. Each semi-simple real three-qutrit state belongs to one of the five families. Only the first family has a chance for the hyperdeterminant to not vanish and in \cite{di2023classification} it was shown that all real orbits of the first family have a representative with $a,b,c$ real. In other words, the semi-simple part of any real three-qutrit state that does not vanish the hyperdeterminant is SLOCC equivalent to one the form
\begin{equation}
    \ket{\psi_S}=av_1+bv_2+cv_3 \text{ with } a,b,c\in \RR, \text{ and } a^2+b^2+c^2=1\;.
\end{equation}

According to \cite{di2023classification} the real semisimple elements in the families 2 and 3 split into subcases (and up to permutation), of which we recall their normalized versions:
\begin{itemize}
    \item $F_2'$: Set 
    \[
\ket{w_1} = \sqrt{\frac{2}{21}}\left(-\ket{212} + \ket{200} + 2\ket{120} - 2\ket{111} + \frac{1}{2}\ket{022} - \frac{1}{2}\ket{001}\right)\;,\]
\[   \ket{w_2} = \sqrt{\frac{2}{21}}\left(-\ket{222}- \ket{201}+2\ket{121}+ 2\ket{110}-\frac{1}{2}\ket{012} - \frac{1}{2}\ket{000}\right)\;.
\]
Then real representatives of this type are of the form \begin{equation}\label{eq:ssF2}
    \ket{\psi_{ss,2}} = a_1\ket{w_1} + a_2\ket{w_2}\;,
\end{equation}
for  $a_1,a_2\in \RR$ with $a_1^2+a_2^2= 1$, and $a_2(a_2^2 - 3 a_1^2) \neq 0$.
    \item $F_3'$:  Set  
    \[
    \ket{w} = \frac{5\sqrt{41}}{8}\left(-2\ket{001} - 2\ket{010} - 2\ket{100}+ 2\ket{111}-\frac{1}{8}\ket{222}\right)\;.
    \]
Then the elements of this family have the form $aw$ for $a\in \RR$. To have a state (with norm 1) we require that $a^2 = 1$, so the real states have $a=\pm 1$. 
\begin{equation}\label{eq:ssF3}
\ket{\psi_{ss,3}} = \pm\frac{5\sqrt{41}}{8}\left(-2\ket{001} - 2\ket{010} - 2\ket{100}+ 2\ket{111}-\frac{1}{8}\ket{222}\right)\;.
\end{equation}

\end{itemize}

    %%%%%%%%%%%%%%%%%%%%%%%%%%%%%%%%%%%%%%%%%%%%%%%
    \subsection{Expressing the \texorpdfstring{$3\times 3\times 3$}{} Hyperdeterminant}\label{sec:det_oeding}
It is known that Schl\"afli's method computes $\Delta_{333}$, but this is one of the few cases for which such a straightforward method exists \cite{WeymanZelevinsky_sing}. 
On the other hand, in this case the invariant ring for $G_{SLOCC}$ is finitely generated by three fundamental invariants $I_6$, $I_9$, and $I_{12}$ (respectively of degrees 6, 9 and 12), whose expressions are found in \cites{Briand2004, Bremner2013}, and we recall these in \Cref{sec:invariants}.
 In 2014, Bremner, Hu, and Oeding \cite{Bremner2014} provided an expression of the hyperdeterminant $\Delta_{333}$ as a polynomial in the fundamental invariants:
\begin{equation}\label{eq:d333}
    \Delta_{333} = {I_6}^3 {I_9}^2 - {I_6}^2 {I_{12}}^2 + 36 {I_6} {I_9}^2 {I_{12}} + 108 {I_9}^4 - 32 {I_{12}}^3\;.
\end{equation}
 The restriction of \Cref{eq:d333} to normalized states  $\ket{\psi_S}$ from \Cref{eq:semi-simple} is
\begin{equation}\label{eq:hyperdet}
\begin{array}{rcl}
\Delta_{333}(\ket{\psi_S})&=&-\frac{4}{3^{18}}\,{a}^{3}{b}^{3}{c}^{3} \left( a+b+c \right) ^{3} \times 
\\ && \left( {a}^{2}+2\, ab-ac+{b}^{2}-bc+{c}^{2} \right) ^{3}  \left( {a}^{2}-ab+2\,ac+{b}^{2}-bc+{c}^{2} \right) ^{3} \times 
\\ && \left( {a}^{2}-ab-ac+{b}^{2}+2\,bc+{c}^{2}  \right) ^{3}  \left( {a}^{2}-ab-ac+{b}^{2}-bc+{c}^{2} \right)^{3}\;.
 \end{array}
\end{equation}

%%%%%%%%%%%%%%%%%%%%%%%%%%%%%%%%%%%%%%%%%%%%%%%%%%
\section{New Maximally Entangled 3-qutrit States}\label{sec:state}

In this section, we present a set of real 3-qutrit states that are maximally entangled from the perspective of the hyperdeterminant. We determine the maximum of the absolute value of the hyperdeterminant for 3-qutrit real states by approximate numerical methods, and then establish an exact and simplified expression of the real state maximizing it via exact symbolic methods.

We first optimize the absolute value of the hyperdeterminant, using numerical methods (gradient-based, meta-heuristic (random walk, particle swarm optimization)), for both real and complex states. We obtained the same assumed maximum, of about $6.907059 \times 10^{-13}$. For example, here are two states $\ket{\psi_{S_1}}$ and $\ket{\psi_{S_2}}$, reaching the assumed maximal numerical value of the hyperdeterminant, expressed as:

\begin{equation}
\ket{\psi_{S_1}} :  (-0.4597089177,\; 0.6279551660, \;0.6279649847)\;,
\end{equation}
\begin{equation}
\begin{split}
\ket{\psi_{S_2}} : (0.4187234964+0.1897453668 i,\; -0.5719715278-0.2591902230 i,\\ -0.5719688559-0.2592064016 i)\;.
\end{split}
\end{equation}

We studied the real critical points of the hyperdeterminant using the software \verb|Maple| and \verb|Mathematica|, and repeated the computation in \texttt{Macaulay2} \cite{M2} to provide an exact expression of the maximum as well as completely enumerate the maximizers:

\begin{theorem}\label{theo:max_hyperdet_333}
The global maximum of the absolute value of the hyperdeterminant $|\Delta_{333}|$, when restricted to real states is $\frac{\sqrt{3}}{2^{19}\times 3^{14}}$. The global max is reached at 12 semi-simple points $a\ket{v_1} + b\ket {v_2} + c\ket{v_3}$ with the following values and their permutations:
 \begin{equation}
(a,b,c) = \left(rs,s,s\right)
,\quad \text{with} \quad r = (1\pm \sqrt 3)\quad \text{and} \quad 
s=\pm \sqrt{\frac{1}{(r^2 +  2)}}\;.
\end{equation}
\end{theorem}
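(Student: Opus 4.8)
The plan is to reduce the optimization of $|\Delta_{333}|$ over the whole real unit sphere in $\H_{333}$ to a constrained optimization in the three real parameters $a,b,c$, and then to solve the resulting polynomial system exactly. First I would localize the maximum to semi-simple states. By \cite{Bremner2014} the value of $\Delta_{333}$ on $\ket\psi = \ket{\psi_S}+\ket{\psi_N}$ depends only on the semi-simple part $\ket{\psi_S}$, and $G_{SLOCC}\ket{\psi_S}$ is the unique closed orbit contained in the closure $\overline{G_{SLOCC}\ket\psi}$ of the SLOCC-orbit of $\ket\psi$. For a unit state $\ket\psi$ this closure contains $\ket\psi$ itself, so the minimal norm $m$ over the closure — attained at a minimal-norm (equivalently, by Kempf--Ness, a \emph{balanced}) point $\ket{\phi_0}$ of $G_{SLOCC}\ket{\psi_S}$ — satisfies $m\le 1$. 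Since $\Delta_{333}$ is constant on the closure, the unit state $\ket{\phi_0}/m$ obeys $|\Delta_{333}(\ket{\phi_0}/m)| = |\Delta_{333}(\ket\psi)|/m^{36} \ge |\Delta_{333}(\ket\psi)|$. Hence the global maximum over the unit sphere is attained at a unit balanced semi-simple state.

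Next I would identify these balanced states explicitly. A direct computation shows that every $a\ket{v_1}+b\ket{v_2}+c\ket{v_3}$ with $a^2+b^2+c^2=1$ has all three single-qutrit reduced density matrices proportional to the identity, i.e.\ is balanced: the kets in \Cref{eq:basis_3qutrits_semisimple} are supported on pairwise disjoint computational basis states, each with uniform one-body marginals, and there are no interfering off-diagonal contributions. Combined with the real normal-form result of \cite{di2023classification}, which furnishes a representative of every first-family real semi-simple orbit with $a,b,c\in\RR$, this identifies the relevant maximizers (in the only family on which $\Delta_{333}$ does not vanish) with the family $a\ket{v_1}+b\ket{v_2}+c\ket{v_3}$, $a^2+b^2+c^2=1$, $a,b,c\in\RR$, up to local unitaries. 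The theorem is thereby reduced to maximizing the explicit polynomial $P(a,b,c):=\Delta_{333}(a\ket{v_1}+b\ket{v_2}+c\ket{v_3})$ of \Cref{eq:hyperdet} on the real $2$-sphere.

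For the optimization I would use Lagrange multipliers, solving $\nabla P = \lambda\,\nabla(a^2+b^2+c^2)$ with $a^2+b^2+c^2=1$. Two simplifications make this manageable: since the sought maximizers have $P\neq 0$, one works on the open set where $|P|$ is smooth; and $P$ is invariant under the $S_3$-action permuting $a,b,c$ (which permutes the three quadratic factors carrying a ``$+2$'' cross-term and fixes the remaining factor) and under the antipodal map. I would therefore first restrict to the symmetry locus $b=c$, where the Lagrange conditions collapse to a single univariate equation giving $r:=a/b = 1\pm\sqrt3$ (equivalently $r^2=2r+2$) together with the normalization $s:=b=\pm(r^2+2)^{-1/2}$, then form the full $S_3$-orbit of the resulting $12$ points, and finally certify with a Gr\"obner-basis computation in \texttt{Macaulay2} that the entire real solution set of the Lagrange system contains no point off the zero locus of $\Delta_{333}$ with a larger value of $|P|$. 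Substituting $(rs,s,s)$ and reducing the four quadratic factors via $r^2=2r+2$ yields $|P| = \tfrac{\sqrt3}{2^{19}\times 3^{14}}$.

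The main obstacle is this final certification that the $12$ enumerated critical points are \emph{global} maxima and not merely critical. The Lagrange system has high degree and many spurious solutions lying on the hypersurface $\Delta_{333}=0$ (where one of the factors of \Cref{eq:hyperdet} vanishes and $|P|=0$); isolating these, proving that no real critical point away from that hypersurface exceeds the candidate value, and carrying this out in exact arithmetic rather than the floating-point search that first located the maximum, is where the symbolic computation must be organized with care — which is why the result is cross-checked in \texttt{Maple}, \texttt{Mathematica}, and \texttt{Macaulay2}.
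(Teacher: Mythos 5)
Your proposal is correct in substance, and its computational core---writing the Lagrange/critical-point system for $\Delta_{333}$ restricted to $a\ket{v_1}+b\ket{v_2}+c\ket{v_3}$ on the sphere $a^2+b^2+c^2=1$, then certifying the maximum by an exact decomposition of that system---is essentially the paper's own second proof: there the ideal $\{q,\nabla\Delta_{333}\}$ is decomposed in \texttt{Macaulay2}, 36 real critical points are found, and the maximum occurs on $\langle b-c,\,a^{2}-2ac-2c^{2},\,q\rangle$ and its permutations, which is exactly your locus $r^2=2r+2$, $r=1\pm\sqrt{3}$. Where you genuinely differ is the reduction step. The paper passes from arbitrary real unit states to the normalized normal-form family by combining the Jordan decomposition (continuous invariants depend only on the semi-simple part, \cite{Bremner2014}) with the real classification of \cite{di2023classification}, and it does not address the norm bookkeeping: a unit state's semi-simple part, or its SLOCC normal form, need not itself have unit norm. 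Your Kempf--Ness argument---the minimal-norm point of the closed orbit inside the orbit closure has norm $m\le 1$, rescaling it multiplies $|\Delta_{333}|$ by $m^{-36}\ge 1$, and the normalized normal forms are themselves minimal vectors because all three one-body marginals equal $\tfrac{1}{3}I$---fills exactly this hole and makes the reduction self-contained. That is a genuine gain in rigor over the paper's implicit treatment.

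One caveat you must repair: as written you run the argument with $G_{SLOCC}=\SL_3(\CC)\times\SL_3(\CC)\times\SL_3(\CC)$ and complex Kempf--Ness theory, so the unit balanced state you produce could a priori be complex. That would only reduce the problem to the \emph{complex} maximization, which the paper does not solve (the remark after the theorem only conjectures that the complex maximum agrees with the real one). To prove the stated theorem about real states you must keep every object real: work with the real group $\SL_3(\RR)\times\SL_3(\RR)\times\SL_3(\RR)$, replace complex Kempf--Ness by the real orbit-closure and minimal-vector theory of Birkes and Richardson--Slodowy, note that the relevant compact group is $SO(3)\times SO(3)\times SO(3)$ (local orthogonal, not unitary, transformations---these still preserve $\Delta_{333}$), and then invoke \cite{di2023classification} as you do to land in the real family $a,b,c\in\RR$. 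Two smaller points: your certification must exclude other real critical points attaining the \emph{same} value, not only larger ones, if you want the full enumeration of the 12 maximizers (the exact decomposition does deliver this); and the restriction to the symmetric locus $b=c$ is only a device for locating candidates---as you acknowledge, the burden of proof rests entirely on the exact treatment of the full system, just as in the paper.
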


\begin{proof}
Let $f=\Delta_{333}$ be the $3\times3\times3$ hyperdeterminant, (see \Cref{eq:hyperdet}). Let $a,b,c \in \RR$ such that $a^2 + b^2 + c^2 = 1$. 

We aim to determine all critical points of the hyperdeterminant, i.e., points where the gradient of $f$ is zero. We can rewrite $c=\pm \sqrt{1-a^2-b^2}$, in order to simplify the study of $f$ to the two variables $a$ and $b$. We distinguish three different cases: 

\noindent\textbf{Case 1:} $c=0$ or $c=1$

    The factors $a^3$, $b^3$ and $c^3$ appear in the expression of $f$, and therefore if $c=0$, $f(a,b,c)=f(a,b,0)=0$. If $c=1$, then $a=b=0$, and we also have $f(a,b,c)=f(0,0,c)=0$. In that case, $|f|$ reaches the minimum. 

\noindent\textbf{Case 2:} $c=\sqrt{1-a^2-b^2}$

    If we substitute the expression of $c$ in $f$, we retrieve:

    \begin{equation}
    \begin{split}
        f(a,b) = -{\frac {4}{3^{18}}}\,{a}^{3}{b}^{3} \left( -{a}^{2}-{b}^{2}+1 \right) ^{3/2} \left( a+b+\sqrt {-{a}^{2}-{b}^{2}+1} \right) ^{3}\\ 
        \left( 2\,ab-a\sqrt {-{a}^{2}-{b}^{2}+1}-b\sqrt {-{a}^{2}-{b}^{2}+1}+ 1 \right) ^{3} \\ 
        \left( -ab+2\,a\sqrt {-{a}^{2}-{b}^{2}+1}-b\sqrt {-{a}^{2}-{b}^{2}+1}+1 \right) ^{3} \\
        \left( -ab-a\sqrt {-{a}^{2}-{b}^{2}+1}+2\,b\sqrt {-{a}^{2}-{b}^{2}+1}+1 \right) ^{3} \\
        \left( -ab-a\sqrt {-{a}^{2}-{b}^{2}+1}-b\sqrt {-{a}^{2}-{b}^{2}+1}+1 \right) ^{3}\;.
    \end{split}
    \end{equation}

    If we differentiate $f$ in both variables and we solve the system of equation $\{ \frac{\partial f(a,b)}{\partial a}= 0, \frac{\partial f(a,b)}{\partial b}= 0\}$ over the real numbers. We use symbolic calculus for solving these equations, namely the \verb|Maple| function \verb|RealDomain[solve]|. We obtain 40 different solutions, on which we evaluate $f(a,b)$. Some of the solutions are points in $\RR^2$, and some depend on $b$. For the points we evaluate the function and take the absolute value of the result. 
    
    For the solutions depending on $b$, we substitute them in the expression of the hyperdeterminant to retrieve an expression $f(b)$ only depending on one variable. We again differentiate this expression and search for critical points. We obtain new points, on which we also evaluate the hyperdeterminant. We regroup all values and look for the maximal one, in absolute value. In this case, the maximum value is exactly $\frac{\sqrt{3}}{2^{19}3^{14}}$.

\noindent\textbf{Case 3:} $c=-\sqrt{1-a^2-b^2}$.

This case is entirely analogous to Case 2, and the results turn out to be identical.
    % If we substitute the expression of $c$ in $f$, we retrieve:
    % \begin{equation}
    % \begin{split}
    % f(a,b) = {\frac {4}{3^{18}}}\,{a}^{3}{b}^{3} \left( -{a}^{2}-{b}^{2}+1 \right) ^{3/2} \left( a+b-\sqrt {-{a}^{2}-{b}^{2}+1} \right) ^{3} \\
    % \left( 2\,ab+a\sqrt {-{a}^{2}-{b}^{2}+1}+b\sqrt {-{a}^{2}-{b}^{2}+1}+1 \right) ^{3} \\
    % \left( -ab-2\,a\sqrt {-{a}^{2}-{b}^{2}+1}+b\sqrt {-{a}^{2}-{b}^{2}+1}+1 \right) ^{3}\\
    % \left( -ab+a\sqrt {-{a}^{2}-{b}^{2}+1}-2\,b\sqrt {-{a}^{2}-{b}^{2}+1}+1 \right) ^{3} \\
    % \left( -ab+a\sqrt {-{a}^{2}-{b}^{2}+1}+b\sqrt {-{a}^{2}-{b}^{2}+1}+1 \right) ^{3}\;.
    % \end{split}
    % \end{equation}
%
    % We follow the same process as in Case 2, and also obtain 40 different solutions when solving the system $\{ \frac{\partial f(a,b)}{\partial a}= 0, \frac{\partial f(a,b)}{\partial b}= 0\}$ over the reals. We also determine all critical points for the solutions depending on $b$ and evaluate the absolute value of the hyperdeterminant at all critical points. We also obtain a maximum value of $\frac{\sqrt{3}}{2^{19}3^{14}}$.
   
Hence the global maximum of $|\Delta_{333}|$ is exactly equal to $\frac{\sqrt{3}}{2^{19}3^{14}}$. Moreover, the maximum values occur at the points listed in the statement of the theorem.

\noindent\textbf{Second proof and enumeration of maximizers:}

We also ran the same procedure as outlined in the proof of \Cref{thm:otherInvariants} to algebraically compute the set of critical points. Set $q  =  a^2 + b^2 + c^2 -1$, the equation of the sphere over the real numbers. We found 60 critical points on 0-dimensional components defined by the system of equations $\{\nabla \Delta_{333}, q \}$, 36 of which were real solutions. Among the real solutions, the maximum values of $|\Delta_{333}|$ are obtained at 3 ideals (symmetric up to permuting $a,b,c$), one of which is:
$\langle b-c,\,a^{2}-2\,a\,c-2\,c^{2}, a^2 + b^2 + c^2 -1\rangle$, and which has the following 4 solutions:
\[
(a,b,c) = \left(rs,s,s\right)
,\quad \text{with} \quad r = (1\pm \sqrt 3)\quad \text{and} \quad 
s=\pm \sqrt{\frac{1}{(r^2 +  2)}}.
\]
The approximations of the coordinates of the solutions in this case are:
\[\begin{array}{rr}
(.459701, -.627963, -.627963)\;, &
(.888074, .325058, .325058)\;, \\
(-.459701, .627963, .627963)\;, &
(-.888074, -.325058, -.325058)
\;.
\end{array}\]
Permuting $a,b,c$ we obtain the 12 states that attain global max of $|\Delta_{3,3,3}|$. 

Finally, we note that the other types of semi-simple elements \eqref{eq:ssF2} and \eqref{eq:ssF3} vanish identically on $\Delta_{333}$, so cannot be global maxima.
\end{proof}

\begin{remark}
We conjecture that the maximum value of the hyperdeterminant for real three-qutrit states is in fact also the maximum value on complex values. Indeed, we ran several heuristic optimization methods over the complex for a generic semi-simple element $\ket{\psi_S}=av_1+bv_2+cv_2$ with $(a,b,c)\in \CC^3$ and $|a|^2+|b|^2+|c|^2=1$. All our methods returned the same maxima. Similarly, perturbing over the complex numbers the extremum found over the real does not provide any better solution.  
\end{remark}

%%%%%%%%%%%%%%%%%%%%%%%%%%%%%%%%%%%%%%%%%%%%%%%%%%%%%%%%%%%%%%%%%%%%%%
\section{Maximizing the Fundamental Invariants}\label{sec:invariants}

For this section, we will use the following expressions of the fundamental invariants (see  \cite{Bremner2014}) evaluated on a general semi-simple normalized state $\ket{\psi_{S}}$ as expressed in \Cref{eq:semi-simple}:

\begin{equation}
   I_{6}(\psi_{S}) = \frac{1}{27}\left( a^6-10a^3b^3-10a^3c^3+b^6-10b^3c^3+c^6 \right)\;,
\end{equation}

\begin{equation}
   I_{9}(\psi_{S}) = \frac{-\sqrt{3}}{243} \left(a-b\right) \left(a-c\right) \left(b-c\right) \left(a^2 + ab+ b^2\right) \left(a^2 + ac + c^2\right) \left(b^2 + bc + c^2\right)\;,
\end{equation}

\begin{equation}I_{12}(\psi_{S}) = \frac{1}{729}\left(\begin{array}{l}
a^{9}b^{3}+a^{3}b^{9}+a^{9}c^{3}+b^{9}c^{3}+a^{3}c^{9}+b^{3}c^{9} \\
\hspace{1cm}-4\,(a^{6}b^{6}+a^{6}c^{6}+b^{6}c^{6})
+2\,(a^{6}b^{3}c^{3}+a^{3}b^{6}c^{3}+a^{3}b^{3}c^{6})
\end{array}
\right)\;.
\end{equation}
We note that the basic invariant in degree 12 is only defined up to a  multiple of $I_6^2$, but one might argue that this version of $I_{12}$ is natural because it does not contain the monomial $a^{12}$, for instance, and has small integer coefficients (up to a global rescaling).

\begin{theorem}\label{thm:otherInvariants}
Let $\ket{v_1}$, $\ket{v_2}$ be the first two basic semi-simple states defined at \eqref{eq:basis_3qutrits_semisimple}.
Then, the global maximum of the absolute value of the fundamental invariants $I_{6}$, $I_{9}$ and $I_{12}$ restricted on generic 3-qutrits, with maximum values  respectively $\frac{1}{18} = .05555$, $\frac{\sqrt{6}}{3888} = 0.0006300127939$ and $\frac{1}{7776} = 0.000128600823$, is reached for the real 3-qutrit state
\[
\ket{M_{333,I}} = \frac{1}{\sqrt{2}} \Big( \ket{v_1} - \ket{v_2} \Big) \;.
\]
Moreover, all real semi-simple states that obtain the maximum are permutations of $\ket{M_{333,I}}$. 
\end{theorem}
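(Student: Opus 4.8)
The plan is to treat this as three constrained optimization problems on the unit sphere $a^2+b^2+c^2=1$ in $\RR^3$. By the discussion in \Cref{sec:real-semisimple}, and because any continuous invariant only depends on the semi-simple part, every real semi-simple state that can realize a nonzero invariant is $G_{SLOCC}$-equivalent either to $av_1+bv_2+cv_3$ with $(a,b,c)$ real on the unit sphere (the closure of family $F_1$), or to one of the explicitly parametrized families $\ket{\psi_{ss,2}}$, $\ket{\psi_{ss,3}}$ of \eqref{eq:ssF2}--\eqref{eq:ssF3}. First I would record the two structural facts that make the problem tractable: each of $I_6,I_9,I_{12}$ is homogeneous in $(a,b,c)$ (of degrees $6,9,12$), and each is $S_3$-equivariant under permuting $a,b,c$ ($I_6$ and $I_{12}$ symmetric, $I_9$ alternating). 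Hence $|I_6|,|I_9|,|I_{12}|$ are genuinely $S_3$-invariant, so maximizers come in full permutation orbits and it suffices to analyze one representative per orbit. The first concrete step is to substitute $(a,b,c)=(\tfrac{1}{\sqrt2},-\tfrac{1}{\sqrt2},0)$, the coordinates of $\ket{M_{333,I}}$, into the three formulas to confirm the claimed values $\tfrac{1}{18}$, $\tfrac{\sqrt6}{3888}$, $\tfrac{1}{7776}$.

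The core of the argument is to enumerate all critical points. At an interior critical point on the sphere the Lagrange condition $\nabla I=\lambda(a,b,c)$ holds; eliminating $\lambda$, this is equivalent to the vanishing of the $2\times2$ minors $a\,\partial_b I-b\,\partial_a I$, $b\,\partial_c I-c\,\partial_b I$, $a\,\partial_c I-c\,\partial_a I$ of the Jacobian of $(I,q)$, where $q=a^2+b^2+c^2-1$. For each invariant I would form the ideal generated by these minors together with $q$, compute a Gr\"obner basis, take the primary decomposition, isolate the zero-dimensional components, and from those extract the real solutions, separating them from the complex ones. Euler's identity $a\,\partial_a I+b\,\partial_b I+c\,\partial_c I=(\deg I)\,I$ gives the bookkeeping relation $\lambda=(\deg I)\,I$ on the unit sphere, so critical values are read off directly. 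Evaluating $|I|$ on each real critical point and comparing finishes the maximization, with the $S_3$-symmetry collapsing the work to one representative per orbit; I expect this to single out the orbit of $(\tfrac{1}{\sqrt2},-\tfrac{1}{\sqrt2},0)$.

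For $I_9$ there is an extra wrinkle: the alternating factor $(a-b)(a-c)(b-c)$ forces $I_9$, and part of its critical ideal, to vanish on the three hyperplanes $a=b$, $a=c$, $b=c$, which are loci of the minimum $|I_9|=0$; these must be recognized and discarded when searching for the maximum. The remaining families are handled by direct evaluation: substituting the one- and two-parameter forms $\ket{\psi_{ss,2}}$ and $\ket{\psi_{ss,3}}$ into the invariants reduces each to a single-variable optimization (or a constant), which I expect to yield values strictly below those at $\ket{M_{333,I}}$, mirroring how the hyperdeterminant vanished on them in \Cref{theo:max_hyperdet_333}.

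The main obstacle is the symbolic solution of the critical systems, especially for $I_{12}$: the Gr\"obner basis and primary decomposition of a degree-$12$ gradient ideal on the sphere can be large, and certifying that every zero-dimensional component has been found and that its real points have been correctly separated from the complex ones is the delicate step. A secondary difficulty is showing that the maximizer is literally the \emph{same} state for all three invariants, i.e.\ that the three independent critical-point computations each return the permutation orbit of $(\tfrac{1}{\sqrt2},-\tfrac{1}{\sqrt2},0)$, rather than merely producing equal numerical maxima by coincidence; this is what underlies the ``simultaneous maximizer'' phenomenon highlighted in the abstract.
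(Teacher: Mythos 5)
Your proposal is correct and follows essentially the same route as the paper: enumerate critical points of each invariant on the sphere $a^2+b^2+c^2=1$ via a polynomial system (the paper forms $\{q,\nabla f\}$ by implicit differentiation, which after clearing the denominator $c$ gives exactly your $2\times 2$ Jacobian minors), decompose that ideal symbolically and numerically in a computer algebra system, compare $|f|$ at the real critical points, and separately evaluate the invariants on the extra real families $\ket{\psi_{ss,2}}$, $\ket{\psi_{ss,3}}$ by one-variable optimization to confirm they fall below the maxima. Your added observations ($S_3$-equivariance, Euler's identity, the vanishing hyperplanes of $I_9$) are sound refinements but do not change the substance of the argument.
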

\begin{proof}
Here is an algebraic method for finding real critical points of a polynomial $f(a,b,c)$ on the sphere $\{a^2 + b^2 + c^2 =1\}$, which we carried out in Macaulay2 (\cite{M2}) but could be done in any symbolic algebra system. 

Since the variables $a,b,c$ are not independent on the sphere, we could solve for $c$ and treat $a,b$ as independent variables as was done above. This involves square roots, which can cause problems with some symbolic methods. Instead, we implicitly differentiate and compute the gradient as follows:
First implicitly differentiate the equation $a^2 + b^2 + c^2 =1$ with respect to $a$ and $b$ treating $a,b$ as independent and $c = c(a,b)$ in order to solve for the values 
\[
\frac{\partial c}{\partial a} = \frac{-a}{c} \quad\text{and}\quad \frac{\partial c}{\partial b} = \frac{-b}{c}\;.
\]
Then compute
\[
\nabla f(a,b,c) = \left(\frac{\partial f}{\partial a}  + \frac{\partial f}{\partial c}\frac{\partial c}{\partial a},
\frac{\partial f}{\partial b}  + \frac{\partial f}{\partial c}\frac{\partial b}{\partial a}
\right)\;.
\]
Set $q = a^2 + b^2 + c^2 -1$. Now we seek to solve the ideal $\{q, \nabla f\}$. We do this both symbolically and numerically. Symbolically we perform an irreducible decomposition of the ideal (the command \texttt{decompose} in \texttt{M2}) and obtain a finite list of ideals $J_i$ that are irreducible over $R = \QQ[a,b,c]$.
It is typical when working on a computer over an exact field like the rationals $\QQ$ that some solutions will not exist over this field, which is why we also compute the numerical decomposition. Still, with a component of our ideal that is not solvable over $\QQ$ we may still be able to evaluate $|f|$ at the points $\mathcal{V}(J_i)$ by substituting $f$ into each ring $R/J_i$ to attempt to obtain a value for $f$ at the points in the zero-set $\mathcal{V}(J_i)$. The result will either be a number, or an ideal (typically using fewer variables) that we can attempt to solve. 

As a caution, we are only interested in the values of $f$, so we do not consider the values of $f$ at complex critical points $(a,b,c)$. Even more, the polynomial $q = a^2+b^2+c^2-1$ only vanishes for real points $(a,b,c)$ of norm 1, the other non-real complex roots will not have norm 1 and will not be solutions to our system of equations. 

A numerical irreducible decomposition may be computed via the command \texttt{solveSystem} using the Numerical Algebraic Geometry package \cite{NAG} in \texttt{M2}. This will produce a list $L$ numerical solutions to the ideal $\{q,\nabla f\}$. Then we can evaluate $f$ at every point of $L$. 

We carried out this procedure in Macaulay2 for each invariant $f \in \{I_6, I_9, I_{12}\}$, and compared the values of $|f|$ at critical points.  The maximal values occur precisely at the critical points listed in the statement of the theorem. We include this computation in the ancillary files accompanying the arXiv version of this article. 
\begin{remark}\label{rem:critPts}
Some of the critical points coincided with the states $\ket{D[3,(1,1,1)]}$ at \eqref{eq:D3} and $\ket{\text{GHZ}_{333}}$, but they attain value $|I_6(\ket{D[3,(1,1,1)]})|=|I_6(\ket{\text{GHZ}_{333}})|= \frac{1}{27}$, which is smaller than the global maximum. The global max is obtained for the Aharonov state $\ket{\mathcal A}$ \eqref{eq:A}.  
\end{remark}

We also compared the values of $|f|$ to the critical values found on the other real semi-simple elements (see \Cref{sec:real-semisimple}) for each respective invariant. We found that none of them were greater than the maxima we found on the generic semi-simple elements.

Here is the argument in more detail.
The general semi-simple element $\ket{\psi_{S}}$ degenerates to each of the other families of semi-simple elements over the complex numbers. However, over the real numbers two additional orbits must be considered corresponding to the states \eqref{eq:ssF2} and \eqref{eq:ssF3}, \cite{di2023classification}. 
Using the methods in \Cref{sec:representation} we can also compute these invariants on the other families of real semi-simple elements, which we denote respectively by $\ket{\psi_{ss,2}}$ and $\ket{\psi_{ss,3}}$.
Set normalization factors $\zeta_6 = \frac{2^5}{3^3\cdot7^3} $, $\zeta_9 = \frac{2^5\sqrt{42}}{3^7\cdot 5^7}$, and $\zeta_{12} = \frac{2^7}{21^6}$.
We calculated:
\[\renewcommand{\arraystretch}{1.5} \begin{array}{rcl}
I_6(\ket{\psi_{ss,2}}) &=& 
\zeta_6\;(3a_{1}^{6}+15a_{1}^{2}a_{2}^{4}+2a_{2}^{6})
\\ &\equiv& \zeta_6 \;(16\,a_{1}^{6}-24\,a_{1}^{4}+9\,a_{1}^{2}+2) \mod{\langle a_1^2+a_2^2-1\rangle}
, 
\\[2ex]
I_9(\ket{\psi_{ss,2}}) &=& 
\zeta_9
\;\left( -a_{1}^{9}+6\,a_{1}^{5}a_{2}^{4}+8\,a_{1}^{3}a_{2}^{6}+3\,a_{1}a_{2}^{8} \right)
\\ &\equiv& \zeta_9
\;(
-4\,a_{1}^{3}+3\,a_{1}) \mod{\langle a_1^2+a_2^2-1\rangle}
,\\[2ex]
I_{12}(\ket{\psi_{ss,2}}) &=& \zeta_{12}\;\left(-3\,a_{1}^{12}-3\,a_{1}^{8}a_{2}^{4}-40\,a_{1}^{6}a_{2}^{6}-57\,a_{1}^{4}a_{2}^{8}-24\,a_{1}^{2}a_{2}^{10}-\,a_{2}^{12} \right)
\\ & \equiv &\zeta_{12}\;
(-32\,a_{1}^{6}+48\,a_{1}^{4}-18\,a_{1}^{2}-1) \mod{\langle a_1^2+a_2^2-1\rangle},
\\[2ex]
 \Delta_{333}(\ket{\psi_{ss,2}}) &=& 0\;.
\end{array}
\]
Note the second case for each invariant we have simplified the expression and replaced all instances $a_2$ via $a_1^2+a_2^2=1$, which does not involve any square roots since each invariant has even degree in $a_2$.
Since the resulting reduced expressions only involve $a_1$ we can use basic 1-variable optimization from calculus. Here is a summary of the critical points and values:
\[
\begin{array}{c|c|c|c|c}
     &  (0,1) &  (\pm \frac{1}{2}, \pm \sqrt{\frac{3}{4}}), & (\pm \sqrt{\frac{3}{4}}, \pm \frac{1}{2}) & \text{max}|I_d|\\ \hline
I_6: & 2\cdot \zeta_6  & 3\cdot \zeta_6 & 2\cdot \zeta_6 &
3\cdot \zeta_6  = .0103660511823777
\\ \hline
I_9: &  & \pm 1 \cdot \zeta_9 &  & \zeta_9 =.00000121376835394049 \\ \hline
I_{12}: & -1\cdot \zeta_{12} & -3\cdot \zeta_{12} & -1\cdot \zeta_{12} & 3\zeta_{12} =  .00000447729237981977 \\ \hline
\end{array}
\]
Note that all the absolute values of these invariants take their maximum value at the same points, $(a_1,a_2) = (\pm \frac{1}{2}, \pm \sqrt{\frac{3}{4}})$, however, these maxima are all smaller than the values for the other type of semi-simple elements.

In the case of $\ket{\psi_{ss,3}}$ only $I_6$ is non-zero, and since $a^2 = 1$ we have
\begin{equation}
|I_6(\ket{\psi_{ss,3}})| = |-\frac{2^{18}}{5^6\cdot 41^3}a^{6}|
 = \frac{2^{18}}{5^6\cdot 41^3} = .000243426763976147\;,
\end{equation}
which is smaller than the maximum value of $\frac{1}{18}$.
\end{proof}

\begin{remark}
As mentioned already the space of degree $12$ invariants is spanned by $I_6^2$ and $I_{12}$, so it doesn't make sense geometrically to choose this particular $I_{12}$ over any other degree 12 invariant, i.e. we can add in any multiple of $I_6^2$ and replace $I_{12}$ with $I_{12} + \lambda I_6$. 
Then we could re-interpret the maximum value computation as follows. 
Maximizing a single invariant $f$ is finding the unit vector $x$ such that $|f(x)|$ is greatest. Another interpretation of this, which generalizes, is to find the unit vector $x$ that is furthest from the variety $\mathcal{V}(f) = \{y \mid f(y) = 0\}$. So, we replace the question of maximizing $I_{12}$ by the question of finding the point(s) of maximal distance from the variety $\mathcal{V}(I_6, I_{12})$, since this variety eliminates the ambiguity in the definition of $I_{12}$.  However, working on semi-simple elements (defined by $3$ real parameters $a,b,c$), and intersecting with the sphere $a^2 + b^2 + c^2 = 1$ we obtain a finite collection of pairs of antipodal points, and as such every point outside that set maximizes the sum of the distances to these points.
\end{remark}

%%%%%%%%%%%%%%%%%%%%%%%
\section{Evaluating Invariants on Random States}\label{sec:representation}
In this section, we study the evaluation of the fundamental invariants, as well as the hyperdeterminant, on random real 3-qutrit states. We provide several visualizations of these evaluations and try to extract useful information concerning the distribution of the amount of entanglement (in the sense of the absolute value of invariants) in the space of real 3-qutrits.
By the term ``random'' we mean we have used a random number generator with uniform distribution on the interval $[-1,1]$.
Moreover, since we are only concerned with the semi-simple part (the nilpotent parts do not change the values of these invariants) we sample uniformly $a$, $b$ and $c$ in $[-1,1]$, and then renormalize. 

For each invariant, we propose three types of plots. The first type is a histogram representing the distribution of values of the absolute value of the invariant. It provides information about which values of the invariant are most present in the space of real 3-qutrit states. The histograms of the hyperdeterminant $\Delta_{333}$ and the invariants $I_6$, $I_9$ and $I_{12}$ (respectively \Cref{fig:hyperdet_histo,fig:i6_histo,fig:i9_histo,fig:i12_histo}) present shapes that are different from the one random homogeneous polynomials of the same degree can provide. For this first plot, we use 500000 random real 3-qutrits as our input data.

We note that the chances to get (very close to) a maximally entangled state based on the number of samples landing in the last bin in the histogram are as follows. For $\Delta_{333}$, it is $0.3294 \%$. For $I_6$, it is  $0.5802\%$. For $I_9$, it is $0.4122 \%$. For $I_{12}$, it is $0.2696 \%$. For maximizing all invariants at the same time, it is $0.3106 \%$.

For the second type of plot, we follow the same approach as Alsina in his thesis \cite{AlsinaThesis}, by generating a significant number of random states, then ordering them in increasing order to obtain a graphical representation. In this second type of plot, we use 20000 random real 3-qutrits as our input data. See \Cref{fig:i6_curve,fig:i9_curve,fig:i12_curve,fig:hyperdet_curve}.

The last type of figure represents the evaluation of the invariant in a spherical plot, generated in \texttt{Maple}. Because we restrict to semi-simple states with real coefficients, we can represent them in the unit sphere defined by $a^2 + b^2 + c^2 = 1$. Depending on the value of the invariant (we don't compute the absolute value here), the color of a given point indicates if we are close to a minimum/maximum or to zero.
See \Cref{fig:hyperdet_sphere,fig:i6_sphere,fig:i9_sphere,fig:i12_sphere}.

    %%%%%%%%%%%%%%%%%%%%%%%%%%%%%%%%%%%%%%%%%%%%%%%%%%%%%%%%%%%%%%%%%%
    \subsection{Random States and \texorpdfstring{$|\Delta_{333}|$}{}}
Now we provide some observations and comments on the plots. 
In the histogram \Cref{fig:hyperdet_histo} and curve \Cref{fig:hyperdet_curve} one can see that most of the states have value of hyperdeterminant close to 0. In addition, it appears that the states close to the maximum value are the less probable (according to the histogram).

Looking at the plot on the sphere \Cref{fig:hyperdet_sphere} one can observe the symmetry of the hyperdeterminant, since the expression of the hyperdeterminant is invariant under permutations of $a, b, c$. We observe that the dominant color is the cyan, expressing the fact that most of the real random states have hyperdeterminant value close to 0, echoing what was noticed from the histogram. 

One can guess from the image that the midpoint between two red points is (most of the time) falling either into the cyan or the red color. Which might mean that the mean of two max states could annihilate the hyperdeterminant or give a another max (in the opposite sign). We saw this in the results from \Cref{theo:max_hyperdet_333} that there are 12 points that realize the maximum, and their coordinates differ by permutations, and sign changes.

It seems that the barycenter of three minimally entangled states (red, in the green/yellow zone) and maximally entangled states (red, in the purple blue zone) is the same: a root of the hyperdeterminant.

\begin{figure}[h]
    \centering
    \begin{subfigure}[b]{0.49\textwidth}
         \centering
         \includegraphics[width=\textwidth]{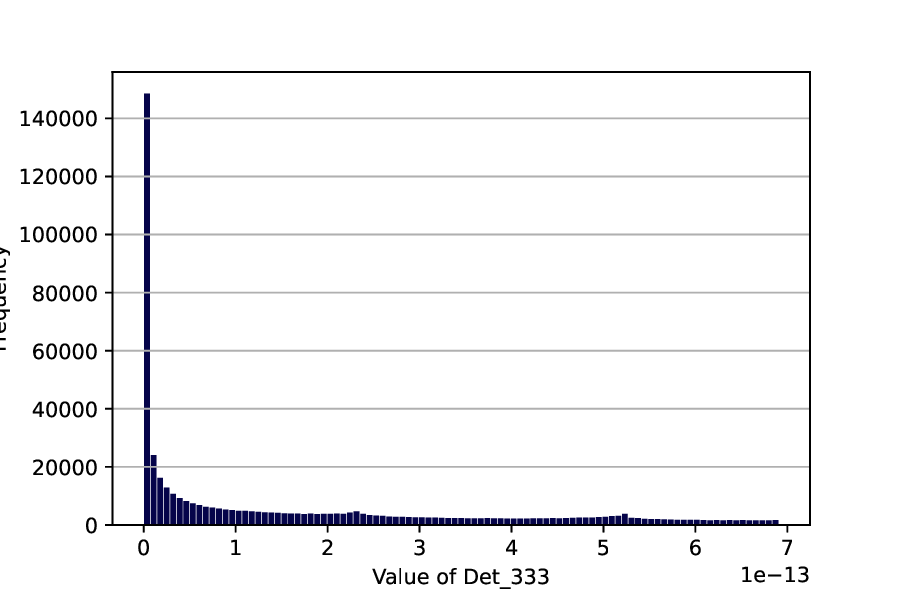}
         \caption{Histogram of the absolute value of the hyperdeterminant for 500,000 randomly generated generic real 3-qutrit states}
         \label{fig:hyperdet_histo}
     \end{subfigure}
     \hfill
    \begin{subfigure}[b]{0.49\textwidth}
         \centering
         \includegraphics[width=\textwidth]{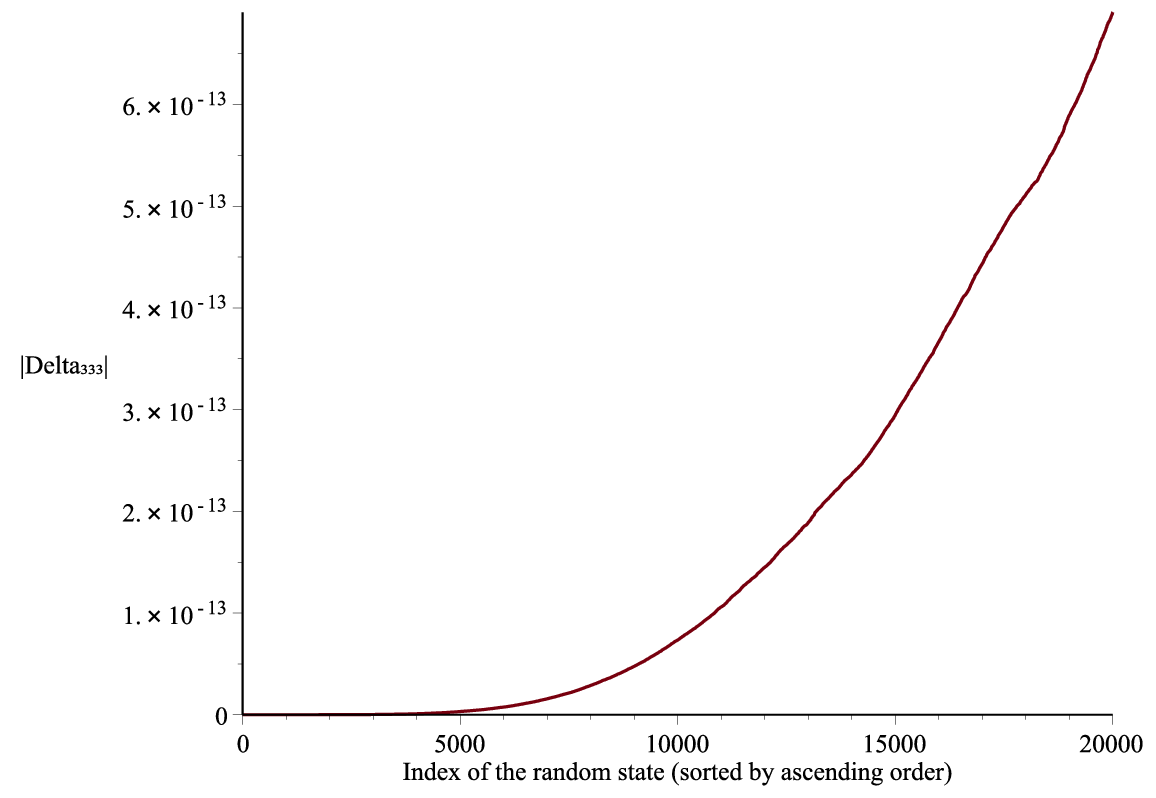}
         \caption{Absolute value of the hyperdeterminant for 20,000 randomly generated generic real 3-qutrit states, sorted in ascending order}
         \label{fig:hyperdet_curve}
     \end{subfigure}
    \caption{}
    \label{fig:hyperdet_plots}
\end{figure}

\begin{figure}[h]
    
\begin{subfigure}[b]{0.49\textwidth}\label{fig:i6_sphere}
    \centering 
    \caption{}    \includegraphics[width=0.7\textwidth]
    {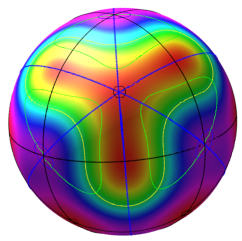}
\end{subfigure}
\begin{subfigure}[b]{0.49\textwidth}\label{fig:i9_sphere}
    \centering
    \caption{} \includegraphics[width=0.7\textwidth]{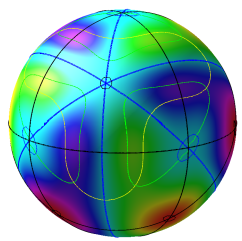}
\end{subfigure}
\begin{subfigure}[b]{0.49\textwidth}\label{fig:i12_sphere}
    \centering
    \caption{} \includegraphics[width=0.7\textwidth]{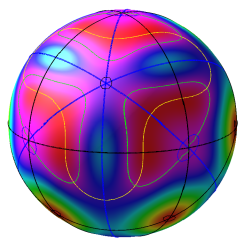}
\end{subfigure}
\begin{subfigure}[b]{0.49\textwidth}\label{fig:hyperdet_sphere}
    \centering
    \caption{} \includegraphics[width=0.7\textwidth]{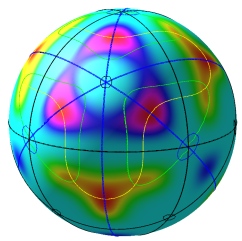}
\end{subfigure}

~

 \begin{tabular}{c|c|c|c|c|c|c}
          Red & Purple & Dark Blue & Cyan & Green & Yellow & Red  \\
          \hline 
          Maximum & Positive & Positive & Close to zero & Negative & Negative & Minimum 
    \end{tabular}
    \caption{A plot of the level sets of the invariants $I_6$ (top left, (A)), $I_9$ (top right, (B)), $I_{12}$ (bottom left, (C)), $\Delta_{333}$ (bottom right, (D)) on the sphere of semi-simple states. The solid curves represent the zero-sets of the invariants, the small circles indicate multiple roots. 
    The line colors of Yellow, Blue, Green, Black respectively correspond to the level sets
         $I_6=0$, $I_9=0$, $I_{12}=0$, $\Delta_{333} = 0$. The other point values are according to the color scheme above.         
}
\end{figure}
    %%%%%%%%%%%%%%%%%%%%%%%%%%%%%%%%%%%%%%%%%%%%%%%%%%%%%%%%%
    \subsection{Random States and \texorpdfstring{$|I_6|$}{}}

Here are some observations regarding the plots in \Cref{fig:i6_plots}. 
It appears that the values of the invariant have the same frequency (around 2700 in the histogram \Cref{fig:i6_histo}), except around 0.02 and 0.04, where we can observe a kind of Gaussian with a high peak (above 14000). This feature is also seen from  \Cref{fig:i6_curve}, where the curve is a linear function until around 10000 then if almost constant between 10000 and 15000 and then become again a linear function from 15000 to 20000.

We speculate that the high frequency of states having  almost the same value of this invariant can tell us something about the size of some SLOCC orbits.

In the sphere plot \Cref{fig:i6_sphere} we notice the symmetry in the plot due to the invariance under permuting $a,b,c$. We also notice some concurrences between roots of one invariant and peeks in the values of this invariant.

\begin{figure}[h]
    \centering
    \begin{subfigure}[b]{0.49\textwidth}
         \centering
         \includegraphics[width=\textwidth]{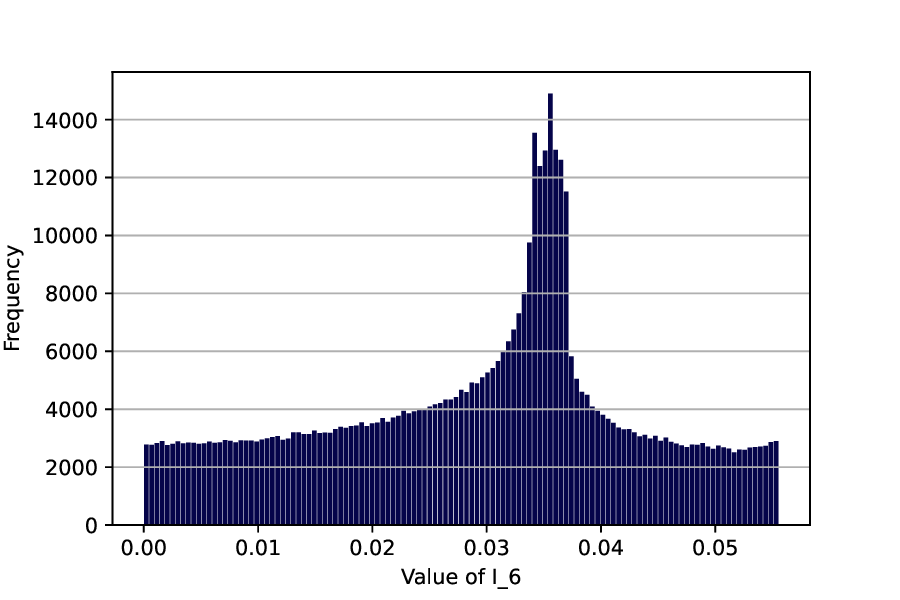}
         \caption{Histogram of the absolute value of fundamental invariant $I_{6}$ for 500,000 randomly generated generic real 3-qutrit states}
         \label{fig:i6_histo}
     \end{subfigure}
     \hfill
    \begin{subfigure}[b]{0.49\textwidth}
         \centering
         \includegraphics[width=\textwidth]{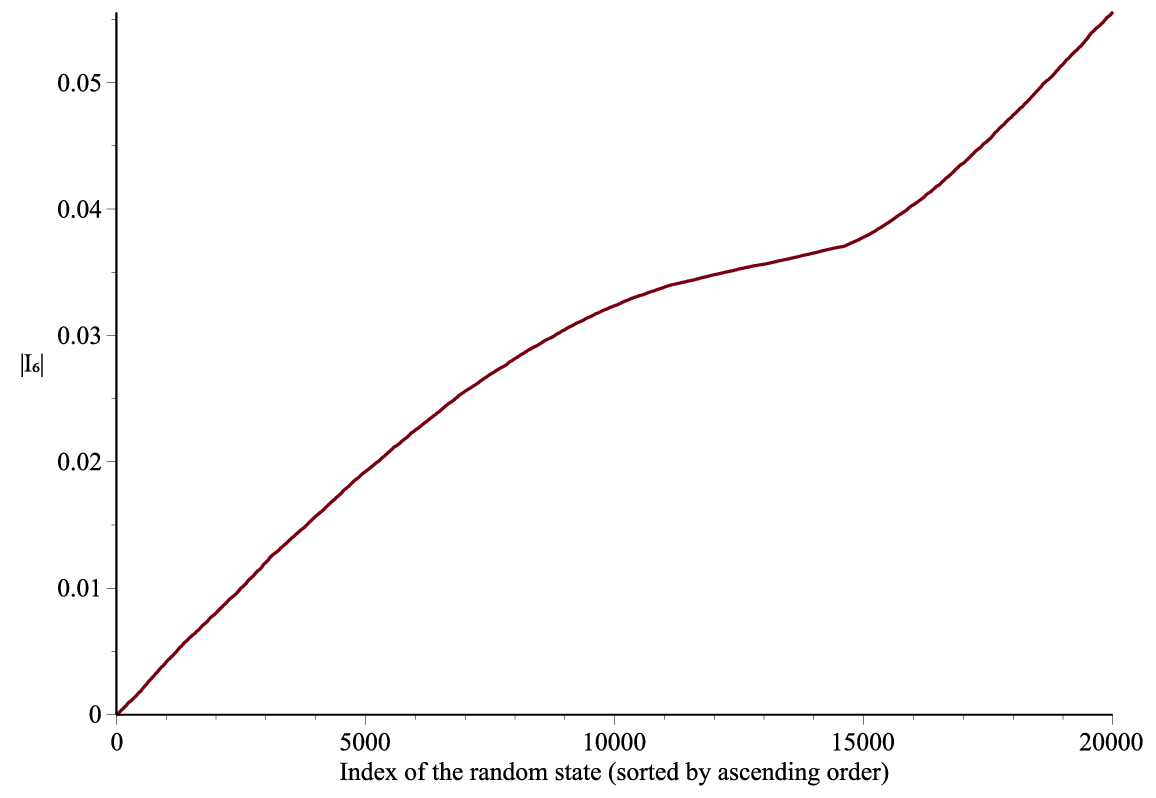}
         \caption{Absolute value of fundamental invariant $I_{6}$ for 20,000 randomly generated generic real 3-qutrit states, sorted in ascending order}
         \label{fig:i6_curve}
     \end{subfigure}
    \caption{}
    \label{fig:i6_plots}
\end{figure}

    %%%%%%%%%%%%%%%%%%%%%%%%%%%%%%%%%%%%%%%%%%%%%%%%%%%%%%%%%
    \subsection{Random States and \texorpdfstring{$|I_9|$}{}}
The histogram \Cref{fig:i9_histo} and curve \Cref{fig:i9_curve} indicate that a large number of states have value close to zero, and there is another small peak in the histogram at around .000275, indicating another region where $I_9$ has more frequent values. In the sphere plot we also note the $a,b,c$ permutation symmetry, and we notice the alternating nature (opposite signs on adjacent sides of the level curve $I_9=0$) due to the fact that the invariant is of odd degree.
\begin{figure}[h]
    \centering
    \begin{subfigure}[b]{0.49\textwidth}
         \centering
         \includegraphics[width=\textwidth]{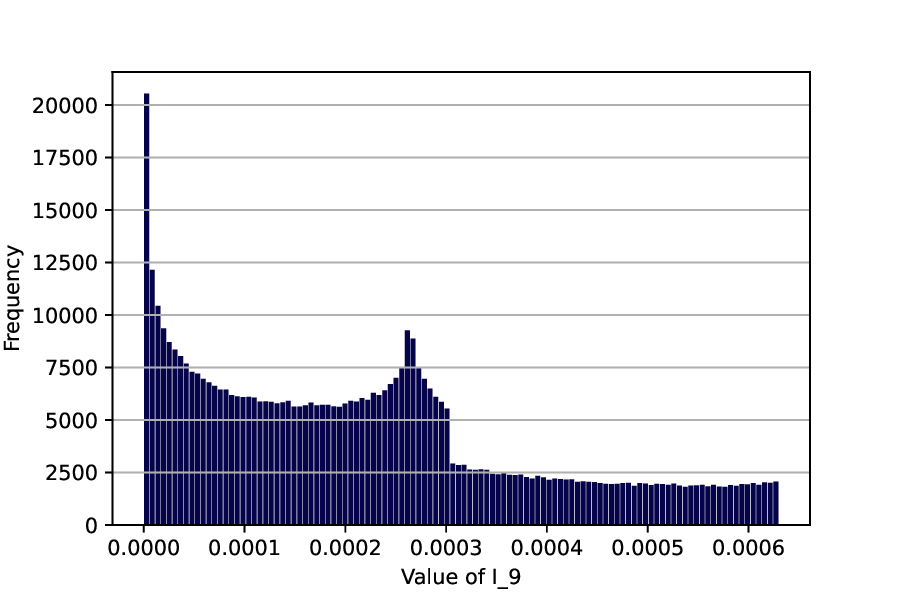}
         \caption{Histogram of the absolute value of fundamental invariant $I_{9}$ for 500,000 randomly generated generic real 3-qutrit states}
         \label{fig:i9_histo}
     \end{subfigure}
     \hfill
    \begin{subfigure}[b]{0.49\textwidth}
         \centering
         \includegraphics[width=\textwidth]{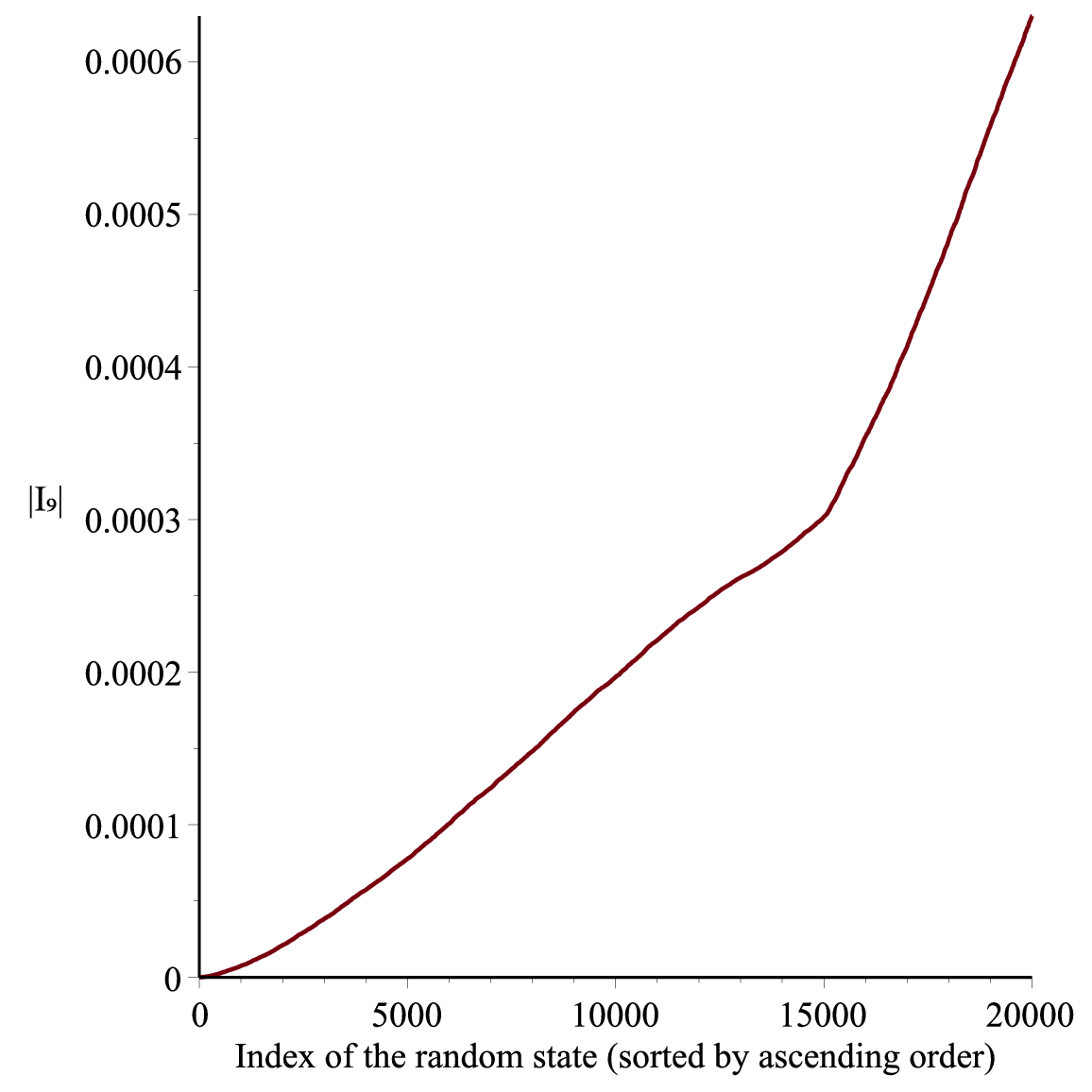}
         \caption{Absolute value of fundamental invariant $I_{9}$ for 20,000 randomly generated generic real 3-qutrit states, sorted in ascending order}
         \label{fig:i9_curve}
     \end{subfigure}
    \caption{}
    \label{fig:i9_plots}
\end{figure}

    %%%%%%%%%%%%%%%%%%%%%%%%%%%%%%%%%%%%%%%%%%%%%%%%%%%%%%%%%
    \subsection{Random states and \texorpdfstring{$|I_{12}|$}{}}
The histogram \Cref{fig:i12_histo} and curve \Cref{fig:i12_curve} indicate that there are several regions where the invariant $I_{12}$ has similar values. In the sphere plot we also note the $a,b,c$ permutation symmetry. As in all the plots we notice that high values of the invariant are infrequent.
\begin{figure}[h]
    \centering
    \begin{subfigure}[b]{0.49\textwidth}
         \centering
         \includegraphics[width=\textwidth]{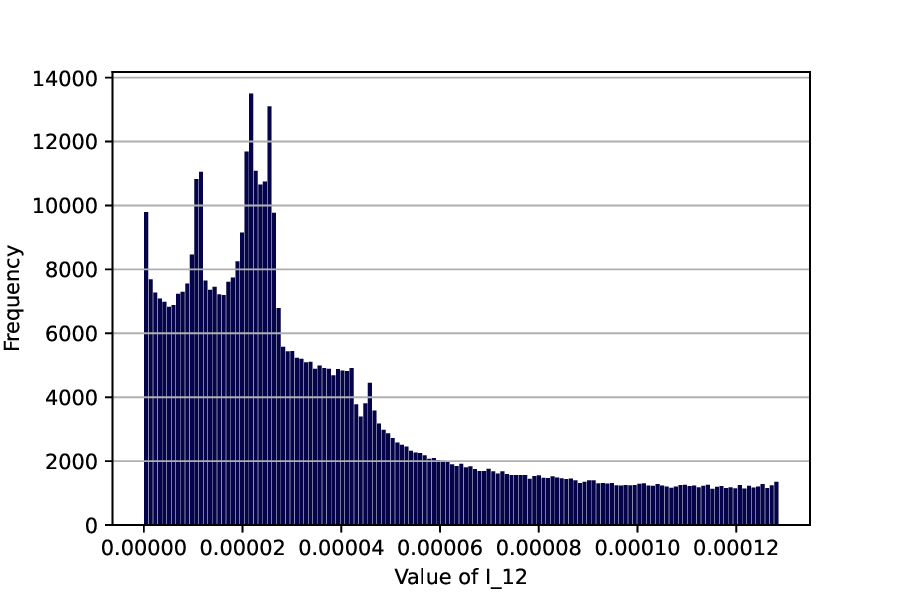}
         \caption{Histogram of the absolute value of fundamental invariant $I_{12}$ for 500,000 randomly generated generic real 3-qutrit states}
         \label{fig:i12_histo}
     \end{subfigure}
     \hfill
    \begin{subfigure}[b]{0.49\textwidth}
         \centering
         \includegraphics[width=\textwidth]{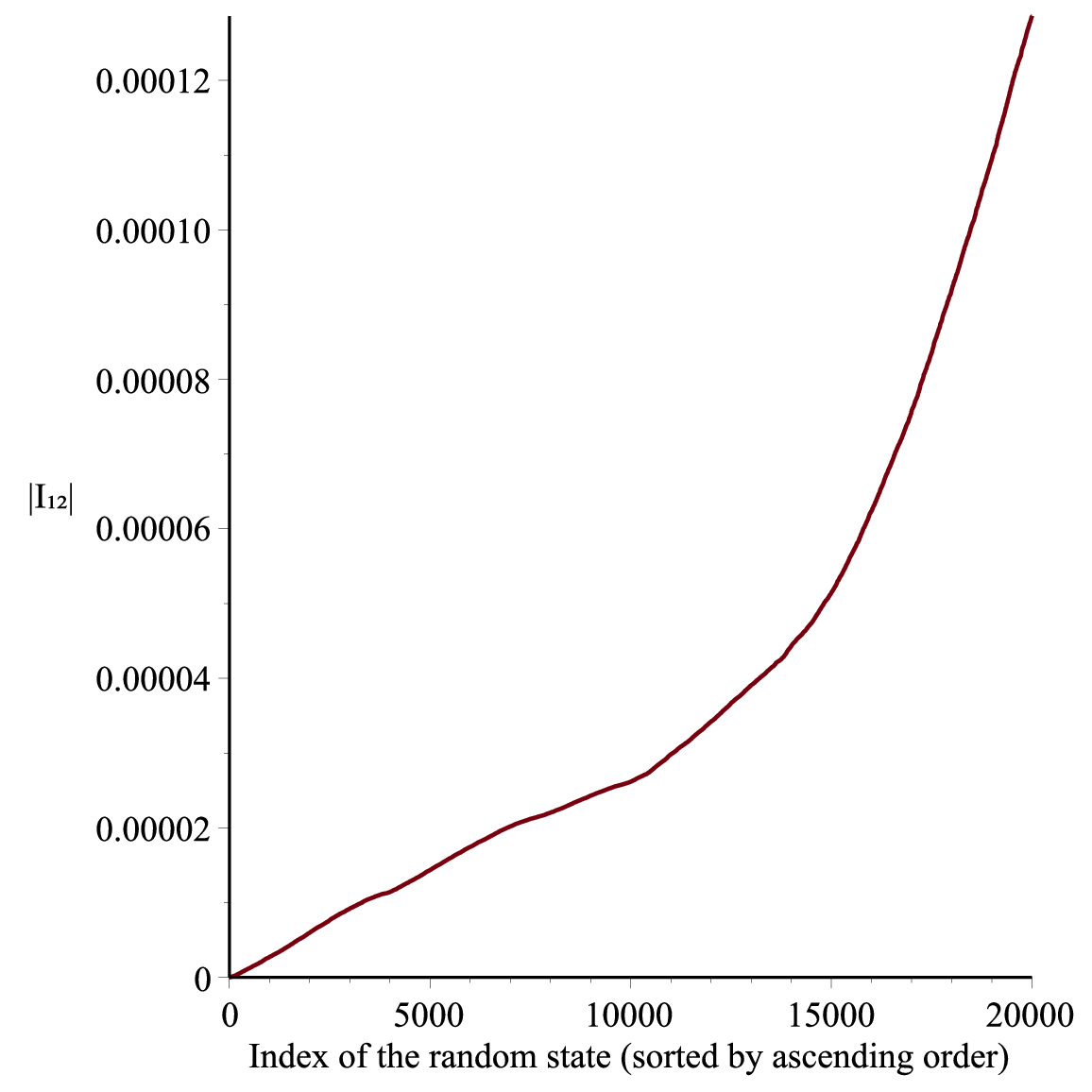}
         \caption{Absolute value of fundamental invariant $I_{12}$ for 20,000 randomly generated generic real 3-qutrit states, sorted in ascending order}
         \label{fig:i12_curve}
     \end{subfigure}
    \caption{}
    \label{fig:i12_plots}
\end{figure}

%%%%%%%%%%%%%%%%%%%%%%%%%%%%%%%%%%%%%%%%%%%%%%%%%%%%%%%%%
    \subsection{Combination of \texorpdfstring{$I_6$, $I_9$ \text{ and } $I_{12}$}{}}
Here we propose a way to combine the measurement of the fundamental invariants. 
We evaluate the quantity $S_I$, defined as the sum of absolute values of invariants divided by their respective maximum value
\begin{equation}
S_I = \frac{|I_6|}{m_{I_6}} +  \frac{|I_9|}{m_{I_9}} + \frac{|I_{12}|}{m_{I_{12}}}\;,\quad 
\text{with}\quad
    m_{I_6} = \frac{1}{18}, \; m_{I_9} = \frac{\sqrt{6}}{3888} , \; m_{I_{12}} = \frac{1}{7776} \;.
\end{equation}
For the Aharonov state $\ket{\mathcal A}$, which maximizes all three fundamental invariants, we have:
$S_I(\ket{ \mathcal{A}}) = 1 + 1 + 1 = 3$.
The histogram \Cref{fig:sum_histogram} shows that large values for this invariant are infrequent, with the most frequent value being approximately 0.7.

\begin{figure}[h]
    \centering
    \includegraphics[width=0.66\textwidth]{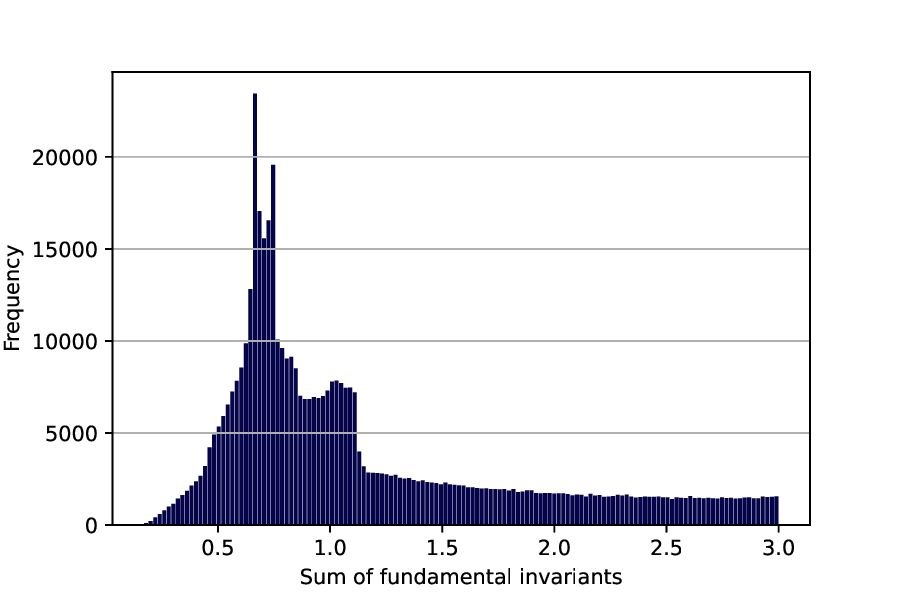}
    \caption{A histogram of the weighted sum of the absolute values of the invariants.}
    \label{fig:sum_histogram}
\end{figure}

%%%%%%%%%%%%%%%%%%%%
\section{Evaluating Invariants on Any State}\label{sec:compare}
Until now in this article we have focused on evaluating invariants on the semi-simple part of a state. If one can compute the Jordan decomposition of a tensor, then the invariants have simple forms, and this is very useful for studying the properties of these invariants. On the other hand, if one is given a state and one does not know, or does not want to compute, its Jordan decomposition there is another way to evaluate the invariants at the given state. 
Here we explain how to use computations outlined in \cite{HolweckOeding2022} to evaluate the basic invariants via simple matrix operations (power-traces and determinants), and then use the formula in \cite{Bremner2014} to evaluate the hyperdeterminant. 

Tensors in $\CC^3 \otimes \CC^3 \otimes \CC^3$ can be viewed in the simple Lie algebra $\mathfrak{e}_6$ which has grading 
\[
\mathfrak{e}_6= \left(\sl_3^{\oplus 3}\right) \oplus \left(\CC^3 \otimes \CC^3 \otimes \CC^3\right) \oplus \left(\CC^3 \otimes \CC^3 \otimes \CC^3\right)^*\;.
\]
As such, we can construct the standard matrix representatives of the adjoint operators $\ad_Z(Y) = [Z,Y]$ using the bracket product from the Lie algebra. We have implemented these computations, which we described in more generality in \cite{HolweckOeding2022} in Macaulay2 \cite{M2} but recognizing that some readers may not be familiar with that system, we provide these matrices here and in a text file in the ancillary files to the arXiv version of this article. 

The matrix $K= \ad_Z(\cdot) = \begin{pmatrix} 0 & 0 & K_{0,2} \\ K_{1,0} & 0 & 0 \\ 0 & K_{2,1} & 0 \end{pmatrix}$ has blocks:

\noindent$K_{0,2}=$

\noindent\resizebox{\textwidth}{!}{$\left(\!\begin{array}{ccccccccccccccccccccccccccc}
       -\frac{1}{3}z_{2,2,2}&\frac{1}{3}z_{2,2,1}&-\frac{1}{3}z_{2,2,0}&\frac{1}{3}z_{2,1,2}&-\frac{1}{3}z_{2,1,1}&\frac{1}{3}z_{2,1,0}&-\frac{1}{3}z_{2,0,2}&\frac{1}{3}z_{2,0,1}&-\frac{1}{3}z_{2,0,0}&\frac{1}{3}z_{1,2,2}&-\frac{1}{3}z_{1,2,1}&\frac{1}{3}z_{1,2,0}&-\frac{1}{3}z_{1,1,2}&\frac{1}{3}z_{1,1,1}&-\frac{1}{3}z_{1,1,0}&\frac{1}{3}z_{1,0,2}&-\frac{1}{3}z_{1,0,1}&\frac{1}{3}z_{1,0,0}&\frac{2}{3}z_{0,2,2}&-\frac{2}{3}z_{0,2,1}&\frac{2}{3}z_{0,2,0}&-\frac{2}{3}z_{0,1,2}&\frac{2}{3}z_{0,1,1}&-\frac{2}{3}z_{0,1,0}&\frac{2}{3}z_{0,0,2}&-\frac{2}{3}z_{0,0,1}&\frac{2}{3}z_{0,0,0}\\
       -\frac{2}{3}z_{2,2,2}&\frac{2}{3}z_{2,2,1}&-\frac{2}{3}z_{2,2,0}&\frac{2}{3}z_{2,1,2}&-\frac{2}{3}z_{2,1,1}&\frac{2}{3}z_{2,1,0}&-\frac{2}{3}z_{2,0,2}&\frac{2}{3}z_{2,0,1}&-\frac{2}{3}z_{2,0,0}&-\frac{1}{3}z_{1,2,2}&\frac{1}{3}z_{1,2,1}&-\frac{1}{3}z_{1,2,0}&\frac{1}{3}z_{1,1,2}&-\frac{1}{3}z_{1,1,1}&\frac{1}{3}z_{1,1,0}&-\frac{1}{3}z_{1,0,2}&\frac{1}{3}z_{1,0,1}&-\frac{1}{3}z_{1,0,0}&\frac{1}{3}z_{0,2,2}&-\frac{1}{3}z_{0,2,1}&\frac{1}{3}z_{0,2,0}&-\frac{1}{3}z_{0,1,2}&\frac{1}{3}z_{0,1,1}&-\frac{1}{3}z_{0,1,0}&\frac{1}{3}z_{0,0,2}&-\frac{1}{3}z_{0,0,1}&\frac{1}{3}z_{0,0,0}\\
       -\frac{1}{3}z_{2,2,2}&\frac{1}{3}z_{2,2,1}&-\frac{1}{3}z_{2,2,0}&\frac{1}{3}z_{2,1,2}&-\frac{1}{3}z_{2,1,1}&\frac{1}{3}z_{2,1,0}&\frac{2}{3}z_{2,0,2}&-\frac{2}{3}z_{2,0,1}&\frac{2}{3}z_{2,0,0}&\frac{1}{3}z_{1,2,2}&-\frac{1}{3}z_{1,2,1}&\frac{1}{3}z_{1,2,0}&-\frac{1}{3}z_{1,1,2}&\frac{1}{3}z_{1,1,1}&-\frac{1}{3}z_{1,1,0}&-\frac{2}{3}z_{1,0,2}&\frac{2}{3}z_{1,0,1}&-\frac{2}{3}z_{1,0,0}&-\frac{1}{3}z_{0,2,2}&\frac{1}{3}z_{0,2,1}&-\frac{1}{3}z_{0,2,0}&\frac{1}{3}z_{0,1,2}&-\frac{1}{3}z_{0,1,1}&\frac{1}{3}z_{0,1,0}&\frac{2}{3}z_{0,0,2}&-\frac{2}{3}z_{0,0,1}&\frac{2}{3}z_{0,0,0}\\
       -\frac{2}{3}z_{2,2,2}&\frac{2}{3}z_{2,2,1}&-\frac{2}{3}z_{2,2,0}&-\frac{1}{3}z_{2,1,2}&\frac{1}{3}z_{2,1,1}&-\frac{1}{3}z_{2,1,0}&\frac{1}{3}z_{2,0,2}&-\frac{1}{3}z_{2,0,1}&\frac{1}{3}z_{2,0,0}&\frac{2}{3}z_{1,2,2}&-\frac{2}{3}z_{1,2,1}&\frac{2}{3}z_{1,2,0}&\frac{1}{3}z_{1,1,2}&-\frac{1}{3}z_{1,1,1}&\frac{1}{3}z_{1,1,0}&-\frac{1}{3}z_{1,0,2}&\frac{1}{3}z_{1,0,1}&-\frac{1}{3}z_{1,0,0}&-\frac{2}{3}z_{0,2,2}&\frac{2}{3}z_{0,2,1}&-\frac{2}{3}z_{0,2,0}&-\frac{1}{3}z_{0,1,2}&\frac{1}{3}z_{0,1,1}&-\frac{1}{3}z_{0,1,0}&\frac{1}{3}z_{0,0,2}&-\frac{1}{3}z_{0,0,1}&\frac{1}{3}z_{0,0,0}\\
       -\frac{1}{3}z_{2,2,2}&\frac{1}{3}z_{2,2,1}&\frac{2}{3}z_{2,2,0}&\frac{1}{3}z_{2,1,2}&-\frac{1}{3}z_{2,1,1}&-\frac{2}{3}z_{2,1,0}&-\frac{1}{3}z_{2,0,2}&\frac{1}{3}z_{2,0,1}&\frac{2}{3}z_{2,0,0}&\frac{1}{3}z_{1,2,2}&-\frac{1}{3}z_{1,2,1}&-\frac{2}{3}z_{1,2,0}&-\frac{1}{3}z_{1,1,2}&\frac{1}{3}z_{1,1,1}&\frac{2}{3}z_{1,1,0}&\frac{1}{3}z_{1,0,2}&-\frac{1}{3}z_{1,0,1}&-\frac{2}{3}z_{1,0,0}&-\frac{1}{3}z_{0,2,2}&\frac{1}{3}z_{0,2,1}&\frac{2}{3}z_{0,2,0}&\frac{1}{3}z_{0,1,2}&-\frac{1}{3}z_{0,1,1}&-\frac{2}{3}z_{0,1,0}&-\frac{1}{3}z_{0,0,2}&\frac{1}{3}z_{0,0,1}&\frac{2}{3}z_{0,0,0}\\
       -\frac{2}{3}z_{2,2,2}&-\frac{1}{3}z_{2,2,1}&\frac{1}{3}z_{2,2,0}&\frac{2}{3}z_{2,1,2}&\frac{1}{3}z_{2,1,1}&-\frac{1}{3}z_{2,1,0}&-\frac{2}{3}z_{2,0,2}&-\frac{1}{3}z_{2,0,1}&\frac{1}{3}z_{2,0,0}&\frac{2}{3}z_{1,2,2}&\frac{1}{3}z_{1,2,1}&-\frac{1}{3}z_{1,2,0}&-\frac{2}{3}z_{1,1,2}&-\frac{1}{3}z_{1,1,1}&\frac{1}{3}z_{1,1,0}&\frac{2}{3}z_{1,0,2}&\frac{1}{3}z_{1,0,1}&-\frac{1}{3}z_{1,0,0}&-\frac{2}{3}z_{0,2,2}&-\frac{1}{3}z_{0,2,1}&\frac{1}{3}z_{0,2,0}&\frac{2}{3}z_{0,1,2}&\frac{1}{3}z_{0,1,1}&-\frac{1}{3}z_{0,1,0}&-\frac{2}{3}z_{0,0,2}&-\frac{1}{3}z_{0,0,1}&\frac{1}{3}z_{0,0,0}\\
       0&0&0&0&0&0&0&0&0&-z_{0,2,2}&z_{0,2,1}&-z_{0,2,0}&z_{0,1,2}&-z_{0,1,1}&z_{0,1,0}&-z_{0,0,2}&z_{0,0,1}&-z_{0,0,0}&0&0&0&0&0&0&0&0&0\\
       z_{0,2,2}&-z_{0,2,1}&z_{0,2,0}&-z_{0,1,2}&z_{0,1,1}&-z_{0,1,0}&z_{0,0,2}&-z_{0,0,1}&z_{0,0,0}&0&0&0&0&0&0&0&0&0&0&0&0&0&0&0&0&0&0\\
       z_{1,2,2}&-z_{1,2,1}&z_{1,2,0}&-z_{1,1,2}&z_{1,1,1}&-z_{1,1,0}&z_{1,0,2}&-z_{1,0,1}&z_{1,0,0}&0&0&0&0&0&0&0&0&0&0&0&0&0&0&0&0&0&0\\
       0&0&0&-z_{2,0,2}&z_{2,0,1}&-z_{2,0,0}&0&0&0&0&0&0&z_{1,0,2}&-z_{1,0,1}&z_{1,0,0}&0&0&0&0&0&0&-z_{0,0,2}&z_{0,0,1}&-z_{0,0,0}&0&0&0\\
       z_{2,0,2}&-z_{2,0,1}&z_{2,0,0}&0&0&0&0&0&0&-z_{1,0,2}&z_{1,0,1}&-z_{1,0,0}&0&0&0&0&0&0&z_{0,0,2}&-z_{0,0,1}&z_{0,0,0}&0&0&0&0&0&0\\
       z_{2,1,2}&-z_{2,1,1}&z_{2,1,0}&0&0&0&0&0&0&-z_{1,1,2}&z_{1,1,1}&-z_{1,1,0}&0&0&0&0&0&0&z_{0,1,2}&-z_{0,1,1}&z_{0,1,0}&0&0&0&0&0&0\\
       0&-z_{2,2,0}&0&0&z_{2,1,0}&0&0&-z_{2,0,0}&0&0&z_{1,2,0}&0&0&-z_{1,1,0}&0&0&z_{1,0,0}&0&0&-z_{0,2,0}&0&0&z_{0,1,0}&0&0&-z_{0,0,0}&0\\
       z_{2,2,0}&0&0&-z_{2,1,0}&0&0&z_{2,0,0}&0&0&-z_{1,2,0}&0&0&z_{1,1,0}&0&0&-z_{1,0,0}&0&0&z_{0,2,0}&0&0&-z_{0,1,0}&0&0&z_{0,0,0}&0&0\\
       z_{2,2,1}&0&0&-z_{2,1,1}&0&0&z_{2,0,1}&0&0&-z_{1,2,1}&0&0&z_{1,1,1}&0&0&-z_{1,0,1}&0&0&z_{0,2,1}&0&0&-z_{0,1,1}&0&0&z_{0,0,1}&0&0\\
       0&0&0&0&0&0&0&0&0&0&0&0&0&0&0&0&0&0&z_{1,2,2}&-z_{1,2,1}&z_{1,2,0}&-z_{1,1,2}&z_{1,1,1}&-z_{1,1,0}&z_{1,0,2}&-z_{1,0,1}&z_{1,0,0}\\
       0&0&0&0&0&0&0&0&0&0&0&0&0&0&0&0&0&0&z_{2,2,2}&-z_{2,2,1}&z_{2,2,0}&-z_{2,1,2}&z_{2,1,1}&-z_{2,1,0}&z_{2,0,2}&-z_{2,0,1}&z_{2,0,0}\\
       0&0&0&0&0&0&0&0&0&-z_{2,2,2}&z_{2,2,1}&-z_{2,2,0}&z_{2,1,2}&-z_{2,1,1}&z_{2,1,0}&-z_{2,0,2}&z_{2,0,1}&-z_{2,0,0}&0&0&0&0&0&0&0&0&0\\
       0&0&0&0&0&0&z_{2,1,2}&-z_{2,1,1}&z_{2,1,0}&0&0&0&0&0&0&-z_{1,1,2}&z_{1,1,1}&-z_{1,1,0}&0&0&0&0&0&0&z_{0,1,2}&-z_{0,1,1}&z_{0,1,0}\\
       0&0&0&0&0&0&z_{2,2,2}&-z_{2,2,1}&z_{2,2,0}&0&0&0&0&0&0&-z_{1,2,2}&z_{1,2,1}&-z_{1,2,0}&0&0&0&0&0&0&z_{0,2,2}&-z_{0,2,1}&z_{0,2,0}\\
       0&0&0&-z_{2,2,2}&z_{2,2,1}&-z_{2,2,0}&0&0&0&0&0&0&z_{1,2,2}&-z_{1,2,1}&z_{1,2,0}&0&0&0&0&0&0&-z_{0,2,2}&z_{0,2,1}&-z_{0,2,0}&0&0&0\\
       0&0&z_{2,2,1}&0&0&-z_{2,1,1}&0&0&z_{2,0,1}&0&0&-z_{1,2,1}&0&0&z_{1,1,1}&0&0&-z_{1,0,1}&0&0&z_{0,2,1}&0&0&-z_{0,1,1}&0&0&z_{0,0,1}\\
       0&0&z_{2,2,2}&0&0&-z_{2,1,2}&0&0&z_{2,0,2}&0&0&-z_{1,2,2}&0&0&z_{1,1,2}&0&0&-z_{1,0,2}&0&0&z_{0,2,2}&0&0&-z_{0,1,2}&0&0&z_{0,0,2}\\
       0&-z_{2,2,2}&0&0&z_{2,1,2}&0&0&-z_{2,0,2}&0&0&z_{1,2,2}&0&0&-z_{1,1,2}&0&0&z_{1,0,2}&0&0&-z_{0,2,2}&0&0&z_{0,1,2}&0&0&-z_{0,0,2}&0
       \end{array}\!\right)$
}

~

\noindent$K_{1,0}=$

\noindent\resizebox{\textwidth}{!}{
$\left(\!\begin{array}{cccccccccccccccccccccccc}
       -z_{0,0,0}&0&-z_{0,0,0}&0&-z_{0,0,0}&0&-z_{1,0,0}&-z_{2,0,0}&0&-z_{0,1,0}&-z_{0,2,0}&0&-z_{0,0,1}&-z_{0,0,2}&0&0&0&0&0&0&0&0&0&0\\
       -z_{0,0,1}&0&-z_{0,0,1}&0&z_{0,0,1}&-z_{0,0,1}&-z_{1,0,1}&-z_{2,0,1}&0&-z_{0,1,1}&-z_{0,2,1}&0&0&0&-z_{0,0,2}&0&0&0&0&0&0&-z_{0,0,0}&0&0\\
       -z_{0,0,2}&0&-z_{0,0,2}&0&0&z_{0,0,2}&-z_{1,0,2}&-z_{2,0,2}&0&-z_{0,1,2}&-z_{0,2,2}&0&0&0&0&0&0&0&0&0&0&0&-z_{0,0,0}&-z_{0,0,1}\\
       -z_{0,1,0}&0&z_{0,1,0}&-z_{0,1,0}&-z_{0,1,0}&0&-z_{1,1,0}&-z_{2,1,0}&0&0&0&-z_{0,2,0}&-z_{0,1,1}&-z_{0,1,2}&0&0&0&0&-z_{0,0,0}&0&0&0&0&0\\
       -z_{0,1,1}&0&z_{0,1,1}&-z_{0,1,1}&z_{0,1,1}&-z_{0,1,1}&-z_{1,1,1}&-z_{2,1,1}&0&0&0&-z_{0,2,1}&0&0&-z_{0,1,2}&0&0&0&-z_{0,0,1}&0&0&-z_{0,1,0}&0&0\\
       -z_{0,1,2}&0&z_{0,1,2}&-z_{0,1,2}&0&z_{0,1,2}&-z_{1,1,2}&-z_{2,1,2}&0&0&0&-z_{0,2,2}&0&0&0&0&0&0&-z_{0,0,2}&0&0&0&-z_{0,1,0}&-z_{0,1,1}\\
       -z_{0,2,0}&0&0&z_{0,2,0}&-z_{0,2,0}&0&-z_{1,2,0}&-z_{2,2,0}&0&0&0&0&-z_{0,2,1}&-z_{0,2,2}&0&0&0&0&0&-z_{0,0,0}&-z_{0,1,0}&0&0&0\\
       -z_{0,2,1}&0&0&z_{0,2,1}&z_{0,2,1}&-z_{0,2,1}&-z_{1,2,1}&-z_{2,2,1}&0&0&0&0&0&0&-z_{0,2,2}&0&0&0&0&-z_{0,0,1}&-z_{0,1,1}&-z_{0,2,0}&0&0\\
       -z_{0,2,2}&0&0&z_{0,2,2}&0&z_{0,2,2}&-z_{1,2,2}&-z_{2,2,2}&0&0&0&0&0&0&0&0&0&0&0&-z_{0,0,2}&-z_{0,1,2}&0&-z_{0,2,0}&-z_{0,2,1}\\
       z_{1,0,0}&-z_{1,0,0}&-z_{1,0,0}&0&-z_{1,0,0}&0&0&0&-z_{2,0,0}&-z_{1,1,0}&-z_{1,2,0}&0&-z_{1,0,1}&-z_{1,0,2}&0&-z_{0,0,0}&0&0&0&0&0&0&0&0\\
       z_{1,0,1}&-z_{1,0,1}&-z_{1,0,1}&0&z_{1,0,1}&-z_{1,0,1}&0&0&-z_{2,0,1}&-z_{1,1,1}&-z_{1,2,1}&0&0&0&-z_{1,0,2}&-z_{0,0,1}&0&0&0&0&0&-z_{1,0,0}&0&0\\
       z_{1,0,2}&-z_{1,0,2}&-z_{1,0,2}&0&0&z_{1,0,2}&0&0&-z_{2,0,2}&-z_{1,1,2}&-z_{1,2,2}&0&0&0&0&-z_{0,0,2}&0&0&0&0&0&0&-z_{1,0,0}&-z_{1,0,1}\\
       z_{1,1,0}&-z_{1,1,0}&z_{1,1,0}&-z_{1,1,0}&-z_{1,1,0}&0&0&0&-z_{2,1,0}&0&0&-z_{1,2,0}&-z_{1,1,1}&-z_{1,1,2}&0&-z_{0,1,0}&0&0&-z_{1,0,0}&0&0&0&0&0\\
       z_{1,1,1}&-z_{1,1,1}&z_{1,1,1}&-z_{1,1,1}&z_{1,1,1}&-z_{1,1,1}&0&0&-z_{2,1,1}&0&0&-z_{1,2,1}&0&0&-z_{1,1,2}&-z_{0,1,1}&0&0&-z_{1,0,1}&0&0&-z_{1,1,0}&0&0\\
       z_{1,1,2}&-z_{1,1,2}&z_{1,1,2}&-z_{1,1,2}&0&z_{1,1,2}&0&0&-z_{2,1,2}&0&0&-z_{1,2,2}&0&0&0&-z_{0,1,2}&0&0&-z_{1,0,2}&0&0&0&-z_{1,1,0}&-z_{1,1,1}\\
       z_{1,2,0}&-z_{1,2,0}&0&z_{1,2,0}&-z_{1,2,0}&0&0&0&-z_{2,2,0}&0&0&0&-z_{1,2,1}&-z_{1,2,2}&0&-z_{0,2,0}&0&0&0&-z_{1,0,0}&-z_{1,1,0}&0&0&0\\
       z_{1,2,1}&-z_{1,2,1}&0&z_{1,2,1}&z_{1,2,1}&-z_{1,2,1}&0&0&-z_{2,2,1}&0&0&0&0&0&-z_{1,2,2}&-z_{0,2,1}&0&0&0&-z_{1,0,1}&-z_{1,1,1}&-z_{1,2,0}&0&0\\
       z_{1,2,2}&-z_{1,2,2}&0&z_{1,2,2}&0&z_{1,2,2}&0&0&-z_{2,2,2}&0&0&0&0&0&0&-z_{0,2,2}&0&0&0&-z_{1,0,2}&-z_{1,1,2}&0&-z_{1,2,0}&-z_{1,2,1}\\
       0&z_{2,0,0}&-z_{2,0,0}&0&-z_{2,0,0}&0&0&0&0&-z_{2,1,0}&-z_{2,2,0}&0&-z_{2,0,1}&-z_{2,0,2}&0&0&-z_{0,0,0}&-z_{1,0,0}&0&0&0&0&0&0\\
       0&z_{2,0,1}&-z_{2,0,1}&0&z_{2,0,1}&-z_{2,0,1}&0&0&0&-z_{2,1,1}&-z_{2,2,1}&0&0&0&-z_{2,0,2}&0&-z_{0,0,1}&-z_{1,0,1}&0&0&0&-z_{2,0,0}&0&0\\
       0&z_{2,0,2}&-z_{2,0,2}&0&0&z_{2,0,2}&0&0&0&-z_{2,1,2}&-z_{2,2,2}&0&0&0&0&0&-z_{0,0,2}&-z_{1,0,2}&0&0&0&0&-z_{2,0,0}&-z_{2,0,1}\\
       0&z_{2,1,0}&z_{2,1,0}&-z_{2,1,0}&-z_{2,1,0}&0&0&0&0&0&0&-z_{2,2,0}&-z_{2,1,1}&-z_{2,1,2}&0&0&-z_{0,1,0}&-z_{1,1,0}&-z_{2,0,0}&0&0&0&0&0\\
       0&z_{2,1,1}&z_{2,1,1}&-z_{2,1,1}&z_{2,1,1}&-z_{2,1,1}&0&0&0&0&0&-z_{2,2,1}&0&0&-z_{2,1,2}&0&-z_{0,1,1}&-z_{1,1,1}&-z_{2,0,1}&0&0&-z_{2,1,0}&0&0\\
       0&z_{2,1,2}&z_{2,1,2}&-z_{2,1,2}&0&z_{2,1,2}&0&0&0&0&0&-z_{2,2,2}&0&0&0&0&-z_{0,1,2}&-z_{1,1,2}&-z_{2,0,2}&0&0&0&-z_{2,1,0}&-z_{2,1,1}\\
       0&z_{2,2,0}&0&z_{2,2,0}&-z_{2,2,0}&0&0&0&0&0&0&0&-z_{2,2,1}&-z_{2,2,2}&0&0&-z_{0,2,0}&-z_{1,2,0}&0&-z_{2,0,0}&-z_{2,1,0}&0&0&0\\
       0&z_{2,2,1}&0&z_{2,2,1}&z_{2,2,1}&-z_{2,2,1}&0&0&0&0&0&0&0&0&-z_{2,2,2}&0&-z_{0,2,1}&-z_{1,2,1}&0&-z_{2,0,1}&-z_{2,1,1}&-z_{2,2,0}&0&0\\
       0&z_{2,2,2}&0&z_{2,2,2}&0&z_{2,2,2}&0&0&0&0&0&0&0&0&0&0&-z_{0,2,2}&-z_{1,2,2}&0&-z_{2,0,2}&-z_{2,1,2}&0&-z_{2,2,0}&-z_{2,2,1}
       \end{array}\!\right)$}

~

\noindent$K_{2,1}=$

\noindent\resizebox{\textwidth}{!}{$\left(\!\begin{array}{ccccccccccccccccccccccccccc}
       z_{1,1,1}&-z_{1,1,0}&0&-z_{1,0,1}&z_{1,0,0}&0&0&0&0&-z_{0,1,1}&z_{0,1,0}&0&z_{0,0,1}&-z_{0,0,0}&0&0&0&0&0&0&0&0&0&0&0&0&0\\
       z_{1,1,2}&0&-z_{1,1,0}&-z_{1,0,2}&0&z_{1,0,0}&0&0&0&-z_{0,1,2}&0&z_{0,1,0}&z_{0,0,2}&0&-z_{0,0,0}&0&0&0&0&0&0&0&0&0&0&0&0\\
       0&z_{1,1,2}&-z_{1,1,1}&0&-z_{1,0,2}&z_{1,0,1}&0&0&0&0&-z_{0,1,2}&z_{0,1,1}&0&z_{0,0,2}&-z_{0,0,1}&0&0&0&0&0&0&0&0&0&0&0&0\\
       z_{1,2,1}&-z_{1,2,0}&0&0&0&0&-z_{1,0,1}&z_{1,0,0}&0&-z_{0,2,1}&z_{0,2,0}&0&0&0&0&z_{0,0,1}&-z_{0,0,0}&0&0&0&0&0&0&0&0&0&0\\
       z_{1,2,2}&0&-z_{1,2,0}&0&0&0&-z_{1,0,2}&0&z_{1,0,0}&-z_{0,2,2}&0&z_{0,2,0}&0&0&0&z_{0,0,2}&0&-z_{0,0,0}&0&0&0&0&0&0&0&0&0\\
       0&z_{1,2,2}&-z_{1,2,1}&0&0&0&0&-z_{1,0,2}&z_{1,0,1}&0&-z_{0,2,2}&z_{0,2,1}&0&0&0&0&z_{0,0,2}&-z_{0,0,1}&0&0&0&0&0&0&0&0&0\\
       0&0&0&z_{1,2,1}&-z_{1,2,0}&0&-z_{1,1,1}&z_{1,1,0}&0&0&0&0&-z_{0,2,1}&z_{0,2,0}&0&z_{0,1,1}&-z_{0,1,0}&0&0&0&0&0&0&0&0&0&0\\
       0&0&0&z_{1,2,2}&0&-z_{1,2,0}&-z_{1,1,2}&0&z_{1,1,0}&0&0&0&-z_{0,2,2}&0&z_{0,2,0}&z_{0,1,2}&0&-z_{0,1,0}&0&0&0&0&0&0&0&0&0\\
       0&0&0&0&z_{1,2,2}&-z_{1,2,1}&0&-z_{1,1,2}&z_{1,1,1}&0&0&0&0&-z_{0,2,2}&z_{0,2,1}&0&z_{0,1,2}&-z_{0,1,1}&0&0&0&0&0&0&0&0&0\\
       z_{2,1,1}&-z_{2,1,0}&0&-z_{2,0,1}&z_{2,0,0}&0&0&0&0&0&0&0&0&0&0&0&0&0&-z_{0,1,1}&z_{0,1,0}&0&z_{0,0,1}&-z_{0,0,0}&0&0&0&0\\
       z_{2,1,2}&0&-z_{2,1,0}&-z_{2,0,2}&0&z_{2,0,0}&0&0&0&0&0&0&0&0&0&0&0&0&-z_{0,1,2}&0&z_{0,1,0}&z_{0,0,2}&0&-z_{0,0,0}&0&0&0\\
       0&z_{2,1,2}&-z_{2,1,1}&0&-z_{2,0,2}&z_{2,0,1}&0&0&0&0&0&0&0&0&0&0&0&0&0&-z_{0,1,2}&z_{0,1,1}&0&z_{0,0,2}&-z_{0,0,1}&0&0&0\\
       z_{2,2,1}&-z_{2,2,0}&0&0&0&0&-z_{2,0,1}&z_{2,0,0}&0&0&0&0&0&0&0&0&0&0&-z_{0,2,1}&z_{0,2,0}&0&0&0&0&z_{0,0,1}&-z_{0,0,0}&0\\
       z_{2,2,2}&0&-z_{2,2,0}&0&0&0&-z_{2,0,2}&0&z_{2,0,0}&0&0&0&0&0&0&0&0&0&-z_{0,2,2}&0&z_{0,2,0}&0&0&0&z_{0,0,2}&0&-z_{0,0,0}\\
       0&z_{2,2,2}&-z_{2,2,1}&0&0&0&0&-z_{2,0,2}&z_{2,0,1}&0&0&0&0&0&0&0&0&0&0&-z_{0,2,2}&z_{0,2,1}&0&0&0&0&z_{0,0,2}&-z_{0,0,1}\\
       0&0&0&z_{2,2,1}&-z_{2,2,0}&0&-z_{2,1,1}&z_{2,1,0}&0&0&0&0&0&0&0&0&0&0&0&0&0&-z_{0,2,1}&z_{0,2,0}&0&z_{0,1,1}&-z_{0,1,0}&0\\
       0&0&0&z_{2,2,2}&0&-z_{2,2,0}&-z_{2,1,2}&0&z_{2,1,0}&0&0&0&0&0&0&0&0&0&0&0&0&-z_{0,2,2}&0&z_{0,2,0}&z_{0,1,2}&0&-z_{0,1,0}\\
       0&0&0&0&z_{2,2,2}&-z_{2,2,1}&0&-z_{2,1,2}&z_{2,1,1}&0&0&0&0&0&0&0&0&0&0&0&0&0&-z_{0,2,2}&z_{0,2,1}&0&z_{0,1,2}&-z_{0,1,1}\\
       0&0&0&0&0&0&0&0&0&z_{2,1,1}&-z_{2,1,0}&0&-z_{2,0,1}&z_{2,0,0}&0&0&0&0&-z_{1,1,1}&z_{1,1,0}&0&z_{1,0,1}&-z_{1,0,0}&0&0&0&0\\
       0&0&0&0&0&0&0&0&0&z_{2,1,2}&0&-z_{2,1,0}&-z_{2,0,2}&0&z_{2,0,0}&0&0&0&-z_{1,1,2}&0&z_{1,1,0}&z_{1,0,2}&0&-z_{1,0,0}&0&0&0\\
       0&0&0&0&0&0&0&0&0&0&z_{2,1,2}&-z_{2,1,1}&0&-z_{2,0,2}&z_{2,0,1}&0&0&0&0&-z_{1,1,2}&z_{1,1,1}&0&z_{1,0,2}&-z_{1,0,1}&0&0&0\\
       0&0&0&0&0&0&0&0&0&z_{2,2,1}&-z_{2,2,0}&0&0&0&0&-z_{2,0,1}&z_{2,0,0}&0&-z_{1,2,1}&z_{1,2,0}&0&0&0&0&z_{1,0,1}&-z_{1,0,0}&0\\
       0&0&0&0&0&0&0&0&0&z_{2,2,2}&0&-z_{2,2,0}&0&0&0&-z_{2,0,2}&0&z_{2,0,0}&-z_{1,2,2}&0&z_{1,2,0}&0&0&0&z_{1,0,2}&0&-z_{1,0,0}\\
       0&0&0&0&0&0&0&0&0&0&z_{2,2,2}&-z_{2,2,1}&0&0&0&0&-z_{2,0,2}&z_{2,0,1}&0&-z_{1,2,2}&z_{1,2,1}&0&0&0&0&z_{1,0,2}&-z_{1,0,1}\\
       0&0&0&0&0&0&0&0&0&0&0&0&z_{2,2,1}&-z_{2,2,0}&0&-z_{2,1,1}&z_{2,1,0}&0&0&0&0&-z_{1,2,1}&z_{1,2,0}&0&z_{1,1,1}&-z_{1,1,0}&0\\
       0&0&0&0&0&0&0&0&0&0&0&0&z_{2,2,2}&0&-z_{2,2,0}&-z_{2,1,2}&0&z_{2,1,0}&0&0&0&-z_{1,2,2}&0&z_{1,2,0}&z_{1,1,2}&0&-z_{1,1,0}\\
       0&0&0&0&0&0&0&0&0&0&0&0&0&z_{2,2,2}&-z_{2,2,1}&0&-z_{2,1,2}&z_{2,1,1}&0&0&0&0&-z_{1,2,2}&z_{1,2,1}&0&z_{1,1,2}&-z_{1,1,1}
       \end{array}\!\right)$}

Note that by construction the trace of powers of $K$ are invariants, and we find that $g_6:=\tr(K^6)$ and $g_{12} :=\tr(K^{12})$ are non-zero, and algebraically independent. 
Unfortunately, in degree 9 we find that $\tr{K^9}=0$, so we must obtain the degree 9 invariant another way. 
To do this we compute $g_9 := \det(S_9)$, where $S_9$ is the Strassen matrix:
\[S_9 = 
\left(\!\begin{array}{ccccccccc}
       0&0&0&-z_{2,0,0}&-z_{2,0,1}&-z_{2,0,2}&z_{1,0,0}&z_{1,0,1}&z_{1,0,2}\\
       0&0&0&-z_{2,1,0}&-z_{2,1,1}&-z_{2,1,2}&z_{1,1,0}&z_{1,1,1}&z_{1,1,2}\\
       0&0&0&-z_{2,2,0}&-z_{2,2,1}&-z_{2,2,2}&z_{1,2,0}&z_{1,2,1}&z_{1,2,2}\\
       z_{2,0,0}&z_{2,0,1}&z_{2,0,2}&0&0&0&-z_{0,0,0}&-z_{0,0,1}&-z_{0,0,2}\\
       z_{2,1,0}&z_{2,1,1}&z_{2,1,2}&0&0&0&-z_{0,1,0}&-z_{0,1,1}&-z_{0,1,2}\\
       z_{2,2,0}&z_{2,2,1}&z_{2,2,2}&0&0&0&-z_{0,2,0}&-z_{0,2,1}&-z_{0,2,2}\\
       -z_{1,0,0}&-z_{1,0,1}&-z_{1,0,2}&z_{0,0,0}&z_{0,0,1}&z_{0,0,2}&0&0&0\\
       -z_{1,1,0}&-z_{1,1,1}&-z_{1,1,2}&z_{0,1,0}&z_{0,1,1}&z_{0,1,2}&0&0&0\\
       -z_{1,2,0}&-z_{1,2,1}&-z_{1,2,2}&z_{0,2,0}&z_{0,2,1}&z_{0,2,2}&0&0&0
       \end{array}\!\right)\;.
\]
To match the conventions as in \cite{Bremner2014} we apply the following changes:

\[
I_6 := \frac{g_6}{-108}, 
\quad I_9 := -g_9, \quad
I_{12} :=  \frac{1}{930}\cdot \left(\frac{g_{12}}{108}- 41\cdot \left(\frac{g_6}{-108}\right)^2\right)\;.
\]
Then we can use the formula in \cite{Bremner2014}*{Theorem~3.1} for the $3\times 3\times 3$ hyperdeterminant:
\[\Delta_{333}=  I_6^3I_9^2 - I_6^2I_{12}^2 + 36I_6I_9^2I_{12} + 108I_9^4 - 32I_{12}^3\;.
\]

Evaluating the invariants on the forms at \Cref{eq:psi1,eq:psi2,eq:D3,eq:ghz,eq:A,eq:psi3,eq:d2d3} we find the following values.
\[\renewcommand{\arraystretch}{1.5}\begin{array}{c||c|c|c|c|c|c|c|c||}
    \text{}& \ket{\Psi_1} & \ket{\Psi_2} & \ket{D[3,(1,1,1)]} & \ket{\text{GHZ}_{333}} & \ket{\mathcal{A}} & \ket{\psi_3} & \ket{D_3^2} & \ket{D_3^3}
    \\ \hline
|I_6| & 0.007245 %& 0.1142 
 & 0.009383 %& 0.1223
& {\frac{1}{27}} %& 0.2222
& \frac{1}{27}
& \frac{1}{18}
& 0
& 0
& \frac{1}{125}

\\
    |I_9| & 7.954\cdot 10^{-5} %& 1.079\cdot 10^{-3}
& 5.712\cdot 10^{-5} %& 9.54\cdot 10^{-4}
& 0 
& 0
& \frac{\sqrt{6}}{3888}
&  0
& 0
& 0
%& 1.890\cdot 10^{-3}
\\
|I_{12}| &
5.245\cdot 10^{-6} & 4.332\cdot 10^{-6} & {\frac{1}{23328}} = \frac{1}{2^5\cdot 3^6} & 0 & \frac{1}{7776} 
& 0
& 0
& \frac{1}{500000}
\\
|\Delta_{333}| &
6.243\cdot 10^{-16} & 7.889\cdot 10^{-17} & 0 & 0 & 0 & 0  & 0 & 0
\end{array}
\]
Note that both $\ket{W}$ and $\ket{W_{333}}$ at \eqref{eq:W} states are nilpotent and hence annihilate all continuous invariants. 
We repeat what was noted in \Cref{rem:critPts} that the Aharonov state $\ket{\mathcal A}$ \eqref{eq:A} maximizes the 3 fundamental invariants, and the states $\ket{D[3,(1,1,1)]}$ \eqref{eq:A} and $\ket{\text{GHZ}_{333}}$ \eqref{eq:ghz} are critical points for $|I_6|$. 

One may check these values with the code we provide in the ancillary files of the arXiv version of this article.  By numerically perturbing each we also checked that none of $\ket{\Psi_1},$ $ \ket{\Psi_2},$ $\ket{\text{GHZ}_{333}},$ $  \ket{\psi_3}$, $\ket{D_3^2},$ $ \ket{D_3^3}$ are critical points of these invariants, however $\ket {D[3,(1,1,1)]}$ is a critical point of $\Delta_{333}$ and  $\ket{\mathcal{A}}$ is a critical point of the fundamental invariants.

For example, starting from $\ket{\Psi_1}$ we obtain the state $\ket{\tilde \Psi_1}$ defined below, which has value $|\Delta_{333}(\ket{\tilde \Psi_1})| = 6.9069\times 10^{-13}$ that is a near max for $\Delta_{333}$. $\ket{\tilde \Psi_1}=$

\noindent\resizebox{\textwidth}{!}{$\begin{array}{rrr}
( 0.039366- 0.023753 i ) \ket{000}&
+(- 0.111693 - 0.197348 i )\ket{001}&
+(- 0.122949- 0.208861 i) \ket{002} \\ 
+(- 0.095285- 0.163765 i) \ket{010}&
+(- 0.174350- 0.009913 i ) \ket{011}&
+(- 0.105943 - 0.112670 i)\ket{012}  \\ 
+(- 0.209619 - 0.105469 i) \ket{020}&
+ (0.199823- 0.155145 i ) \ket{021}&
+( 0.101870 - 0.036988 i )\ket{022} \\ 
+(- 0.077520 + 0.002270 i )\ket{100}&
+(- 0.074101 - 0.083099 i ) \ket{101}&
+(- 0.008464 + 0.125058 i)\ket{102}  \\ 
+(- 0.024819 - 0.278482 i ) \ket{110}&
+( 0.188483 + 0.174583 i)\ket{111}&
(- 0.136120 - 0.188541 i) \ket{112} \\ 
+( 0.043834- 0.109649 i) \ket{120}&
+(- 0.095310 + 0.006777 i)\ket{121}&  
+(- 0.205072 + 0.188972 i)\ket{122} \\ 
+(- 0.327100 - 0.115437 i)\ket{200}&
+(- 0.130088 + 0.126213 i ) \ket{201}&
+ (0.093746 - 0.158298 i) \ket{202} \\ 
+ (0.012939 - 0.056325 i) \ket{210}&
+(- 0.092719 + 0.003978 i )\ket{211}&
+(- 0.175832 + 0.043777 i)\ket{212} \\ 
+ (0.082688+ 0.108752 i) \ket{220}&
+ (0.139528+ 0.196955 i) \ket{221}&
+(- 0.112392- 0.109688 i ) \ket{222}.
\end{array}$}

%%%%%%%%%%%%%%%%%%%%
\section{Conclusion}\label{sec:conclusion}
In this article, we explored the question of maximally entangled states, from the point of view of the evaluation of absolute values of invariants as a measure of entanglement, in the context of 3-qutrit states. We found new states that maximize the absolute value of the hyperdeterminant.  We showed that the Aharonov state is a simultaneous maximizer for the 3 fundamental invariants. We provided plots of the frequencies of the values of the invariants on semi-simple states, as well as level-set plots on the sphere, which illustrate their behavior. 

The idea of using algebraic invariants to measure entanglement has been investigated in the past in the case of qubit \cites{gour2014symmetric, chen2013proof} but not for qutrit. With the development of Noisy Intermediate Scale Quantum computers, one can ask the question of evaluating those algebraic invariants directly on a quantum device that produces quantum states. The evaluation of the Cayley hyperdeterminant on three-qubit states by means of measurement \cites{perez2020measuring, bataille2022quantum} opens the path to those similar questions for qutrit systems. 

In future works, it could be interesting to carry out similar analyses for more qubits or qudits. We anticipate some difficulty in the cases where a Jordan decomposition is not known or if there is not a generic semi-simple element that simplifies the computation and study of the invariants. Also, other challenges are anticipated since in most cases the entire invariant ring is not known, and the hyperdeterminant has high degree and can be difficult to compute. In terms of applications, it could be interesting to produce quantum protocols where the maximally entangled three-qutrit states found in this work exhibit different performances (like the Aharonov state and its role in protocols as explained in the introduction). In this respect we plan to investigate three-qutrit quantum games scenario based on Mermin's like inequalities \cite{Lawrence2017}.

%%%%%%%%%%%%%%%%%%%%%%%%%%
\section*{Acknowledgements}
Holweck and Oeding acknowledge support from the Thomas Jefferson Foundation. Oeding was also supported by CNRS during part of this work.

\bibliography{biblio.bib}

\end{document}